\documentclass[10pt]{article}
\usepackage[english]{babel}
\usepackage[T2A]{fontenc}
\usepackage[utf8]{inputenc}
\usepackage{amsmath}
\usepackage{amsthm}
\usepackage{amssymb}
\usepackage{textcomp}
\usepackage{graphicx}
\usepackage[unicode,pdfborder={0 0 1}]{hyperref}

%general commands

\newcommand{\eqlb}[2]{\begin{equation} \label{#1} #2 \end{equation}}
\newcommand{\eq}[1]{\begin{equation*} #1 \end{equation*}}
\newcommand{\eqa}[1]{\begin{eqnarray*} #1 \end{eqnarray*}}
\newcommand{\eqal}[2]{\begin{eqnarray}#2\label{#1}\end{eqnarray}}
\newcommand{\mfs}[2]{\mbox{#1$#2$}}                                           %header in text mode (in order to change fonts)
\newcommand{\eqs}[1]{$#1$}
\newcommand{\brc}[1]{\left(#1\right)}
\newcommand{\bsq}[1]{\left[#1\right]}
\newcommand{\bfi}[1]{\left\{ #1\right\}}
\newcommand{\brs}[1]{\brc{#1}}
\newcommand{\bfis}[1]{\{#1\}}
\newcommand{\bn}[1]{#1}
\newcommand{\abs}[1]{\left|#1\right|}
\newcommand{\flr}[1]{\left\lfloor #1\right\rfloor}
\newcommand{\cl}[1]{\left\lceil #1\right\rceil}

\newcommand{\qq}{\qquad}
\newcommand{\ds}{\displaystyle}
\newcommand{\pd}[2]{\dfrac{\partial#1}{\partial#2}}

\newcommand{\wt}[1]{\widetilde{#1}}
\newcommand{\matr}[2]{\begin{array}{#1}#2\end{array}}
\newcommand{\pmat}[2]{\brc{\matr{#1}{#2}}}
\newcommand{\at}[2]{\genfrac{}{}{0pt}{}{#1}{#2}}

\newcommand{\tps}[2]{#1}

\newcommand{\dd}{\widetilde{\delta}}
\newcommand{\bld}{\boldsymbol}
\newcommand{\bm}[1]{\bld{#1}}
\newcommand{\adots}{{\mbox{\rotatebox[origin=c]{90}{$\ddots$}}}}

\newtheorem{prop}{Proposition}[section]

% commands to support iopams standarts

\newcommand{\rmi}{\textrm{i}}
\newcommand{\rme}{\textrm{e}}
\newcommand{\rmd}{\textrm{d}}
\newcommand{\Tr}{\textrm{Tr}}
\newcommand{\Or}{\boldsymbol{O}}
\newcommand{\fl}{}

% appearance settings

\setcounter{tocdepth}{2}
\numberwithin{equation}{section}
\textwidth 17cm
\textheight 24.5cm
\hoffset= - 1.0in
\voffset=-1.1in

% header

\title{Reduction of the Elliptic \eqs{SL(N,\mathbb C)} top}

\author{G. Aminov\\
{\em e-mail: aminov@itep.ru}\\
Institute for Theoretical and Experimental Physics, Moscow, Russia\\
Moscow Institute of Physics and Technology, Moscow, Russia\\
\\
S. Arthamonov\\
{\em e-mail: artamonov@itep.ru}\\
Institute for Theoretical and Experimental Physics, Moscow, Russia\\
Moscow Institute of Physics and Technology, Moscow, Russia\\
}

\begin{document}

\maketitle

\begin{abstract}

We propose a relation between the elliptic \eqs{SL(N,\mathbb C)} top and Toda systems and obtain a new class of integrable systems in a specific limit of the elliptic \eqs{SL(N,\mathbb C)} top. The relation is based on the Inozemtsev limit (IL) and a symplectic map from the elliptic Calogero-Moser system to the elliptic \eqs{SL(N,\mathbb C)} top. In the case when \eqs{N=2} we use an explicit form of a symplectic map from the phase space of the elliptic Calogero-Moser system to the phase space of the elliptic \eqs{SL(2,\mathbb C)} top and show that the limiting tops are equivalent to the Toda chains. In the case when \eqs{N>2} we generalize the above procedure using only the limiting behavior of Lax matrices. In a specific limit we also obtain a more general class of systems and prove the integrability in the Liouville sense of a certain subclass of these systems. This class is described by a classical $r$-matrix obtained from an elliptic $r$-matrix.

\end{abstract}

\maketitle

\tableofcontents

\section{Introduction}

We study four integrable systems whose equations of motion have a Lax representation with spectral parameter \cite{Krichever1}, \cite{Krichever2}, \cite{Dubrovin}. We consider periodic and non-periodic Toda chains, the elliptic Calogero-Moser model, and the elliptic $SL(N,\mathbb C)$ top. It was established earlier by several authors that the systems are related to each other.
In \cite{Inozemtsev} Inozemtsev has proposed a procedure (IL) giving a limit relation between the Toda chains and the elliptic Calogero-Moser model. Later, the IL was generalized and used to establish connections between other integrable systems. Chernyakov and Zotov have shown in \cite{ZotovChernyakov} that the IL applied to the \eqs{SL(N,\mathbb C)} elliptic Euler-Calogero model and the elliptic Gaudin model produces new Toda-like systems endowed with additional degrees of freedom corresponding to a coadjoint orbit in \eqs{SL(N,\mathbb C)}.
Levin, Olshanetsky, and Zotov \cite{LOZ} have constructed a singular symplectic transformation from the elliptic Calogero-Moser system to the elliptic $SL(N,\mathbb C)$ top. Using this transformation Smirnov has shown in \cite{Smirnov3} that integrable tops on the algebra \eqs{\mathfrak{sl}(N,\mathbb C)} are equivalent to the $N$-particle trigonometric and rational Calogero-Moser systems. The relations between the systems can be described by the following diagrams:
\eqlb{eq:kndiag}{\matr{ccc}{
\textrm{ECM model}&\longleftrightarrow&SL(N,\mathbb C) \textrm{ top}\\
\downarrow\textrm{IL}\\
\textrm{Toda system}\\
}}
\eq{\matr{ccc}{
\textrm{Elliptic CM model}&\longleftrightarrow&SL(N,\mathbb C) \textrm{ top}\\
\downarrow\textrm{}&&\downarrow\textrm{}\\
\textrm{Trigonometric/Rational CM model}&\longleftrightarrow&\textrm{Limiting top}\\
}}

First goal of this paper is to obtain the Toda chain from the elliptic top. This complements the diagram (\ref{eq:kndiag}) in the following way:
\eq{\matr{ccc}{
\textrm{ECM model}&\longleftrightarrow&SL(N,\mathbb C) \textrm{ top}\\
\downarrow\textrm{IL}&&\downarrow\textrm{IL}\\
\textrm{Toda system}&\longleftrightarrow&\textrm{Limiting top}\\
}}

In order to obtain a new relation we will use a procedure similar to the Inozemtsev limit. The Inozemtsev limit is a combination of the trigonometric limit, infinite shifts of particle coordinates, and rescalings of the coupling constants. To obtain a limiting system equivalent to the Toda chain it is necessary to combine the Inozemtsev limit and the infinite shift of the spectral parameter. Since the spectral parameter of the elliptic \eqs{SL(N,\mathbb C)} top is given on a complex torus \eqs{T^{2}} with moduli $\tau$,
under the trigonometric limit \eqs{Im (\tau)\rightarrow+\infty} we obtain systems with spectral parameter given on an infinite complex cylinder \eqs{\mathbb C/\mathbb Z}.

In the case of the elliptic \eqs{SL(2,\mathbb C)} top it is convenient to use an explicit form (\ref{eq:strans2}) of a symplectic map from the phase space of the elliptic Calogero-Moser system to the phase space of  the top (Subsection \ref{sec:connection}). Then the Inozemtsev shifts of the elliptic Calogero-Moser system coordinates induce the rescalings of the elliptic \eqs{SL(2,\mathbb C)} top coordinates. The equivalence between the limiting systems and the Toda chains is due to the bosonizations formulas which follow from the limit of (\ref{eq:strans2}).

To derive an explicit form of the map between the phase spaces of the Calogero-Moser system and the elliptic top in the case of the elliptic \eqs{SL(N>2,\mathbb C)} top  is not as simple as when \eqs{N=2}. That is why we use the scalings of coordinates induced by the limiting behavior of Lax matrices and thus generalize \eqs{N=2} case. Also, the scalings of coordinates satisfy an important requirement, that is the limit of the Poisson algebra of the elliptic \eqs{SL(N,\mathbb C)} top  must define a Poisson structure on the phase space of the limiting system. This Poisson structure along with the values of the Casimir functions define the symplectic submanifold, for which there is the symplectic map to the phase space of the Toda chain. Equations of motion of the limiting system have Lax representation and are equivalent to the equations of motion of the Toda chain.

%But opposed to the case \eqs{N=2} the Poisson algebra of the limiting top can be separated into two sufficiently different disjoint subalgebras. The equations of motion for the elements of the first subalgebra have Lax representation and are equivalent to the equations of motion of the Toda chain, whereas the equations of motion for other elements can be integrated after solving the equations of motion for the elements of the first subalgebra. Thereby, the limiting top is equivalent to the Toda chain.

Second goal of this paper is to obtain in the limit a more general class of systems and prove the integrability in the Liouville sense of a certain subclass of these systems. This class appears under specific conditions on the parameters of the limit and contains Toda chains as a special case. It is possible that further study of this class will lead to establishing a connection between integrable systems mentioned above and gauge theories. Such an approach was developed earlier. For example, in \cite{GGM1}, \cite{GGM2} Toda-like systems corresponding to the multi-component magnets were studied in the context of the low-energy effective \eqs{N=2} SUSY gauge theories.

Now we will review general facts and notation about the integrable systems under consideration.

\subsection{Elliptic \tps{$SL(N,\mathbb C)$}{SL(N,C)} top}

The elliptic \eqs{SL(N,\mathbb C)} top is an example of Euler-Arnold top \cite{Arnold}. The elliptic \eqs{SL(N,\mathbb C)} top is defined on a coadjoint orbit of the group \eqs{SL(N,\mathbb C)}:
\eqlb{eq:topps}{\mathcal R^{\mathrm{rot}}=\{\bld S\in \mathfrak{sl}(N,\mathbb C),\quad \bld S=a^{-1}\bld S^{(0)}a\},}
where $a\in SL(N,\mathbb C)$ is defined up to the left multiplication on the stationary subgroup
\eqs{G_{0}} of \eqs{\bld S^{(0)}.} The phase space \eqs{\mathcal R^{\mathrm{rot}}}
is equipped with the Kirillov-Kostant symplectic form
\eq{\omega^{\mathrm{rot}}=\Tr\brc{\bld S^{(0)}\rmd aa^{-1}\wedge\rmd aa^{-1}}.}

The Hamiltonian is defined as
\eqlb{eq:Jhamtop}{H^{\mathrm{rot}}=-\dfrac12 \Tr\bld S J(\bld S).}
Here we consider a special form of a linear operator $J$ that provides the integrability of the system
\eqa{J(\bld S)=\sum_{mn}J_{mn}s_{mn}T_{mn},\qq
J_{mn}=E_2\brc{\dfrac{m+n\tau}N,\tau},
\cr m,n\in\{0,\dots, N-1\},\qq m^2+n^2\neq0,}
where $E_2(z,\tau)$ is the second Eisenstein function (see \cite{Veyl1}) defined on the complex torus \eqs{T^2:\mathbb{C}/\brc{2\omega_1\mathbb Z+2\omega_2\mathbb Z}} with
\eqs{\omega_1=\frac12,}\quad
\eqs{\tau=\omega_2/\omega_1,}
and \eqs{s_{mn}} are coordinates in the sin-algebra basis \eqs{T_{mn}} (see %Appendix 
\ref{app:veylbasis}).

The equations of motion can be written in the Lax form \cite{Reynman1}:
\eqlb{eq:toplaxeq}{\dfrac{\rmd L^{\mathrm{rot}}}{\rmd t}=N\bsq{L^{\mathrm{rot}},M^{\mathrm{rot}}}.}
Factor $N$ in (\ref{eq:toplaxeq}) comes from the definition of Lax matrices in the sin-algebra basis from %Appendix 
\ref{app:veylbasis}
\eqlb{eq:toplaxmatr}{\matr{ll}{
\ds L^{\mathrm{rot}}=\sum_{m,n}s_{mn}\varphi\bsq{\at mn}(z)T_{mn},&
\varphi\bsq{\at mn}(z)=\bld\rme\brc{-\dfrac{nz}N}\phi\brc{-\dfrac{m+n\tau}N,z},
\cr
\ds M^{\mathrm{rot}}=\sum_{m,n}s_{mn}f\bsq{\at mn}(z)T_{mn},&
f\bsq{\at mn}(z)=\bld\rme\brc{-\dfrac{nz}N}\partial_u\phi(u,z)|_{u=-\frac{m+n\tau}N},
}}
where \eqs{\bld\rme\brc{z}\equiv\exp\brc{2\pi\rmi z},}\quad \eqs{\rmi\equiv\sqrt{-1},} and $\phi$ is a combination of theta-functions (see %Appendix 
\ref{app:ellipticfunctions}). The Lax matrix satisfies the properties of quasi-periodicity:

\eqlb{eq:topqp}{L^{\mathrm{rot}}\brc{z+1}=T_{10}L^{\mathrm{rot}}\brc{z}T_{10}^{-1},
\qq
L^{\mathrm{rot}}\brc{z+\tau}=T_{01}L^{\mathrm{rot}}\brc{z}T_{01}^{-1}.}

Consequently, \eqs{\Tr\brc{L^{\mathrm{rot}}\brc{z}}^k} are doubly periodic functions with the poles of order up to $k$, and thus they can be expanded in the basis consisting of the second Eisenstein function and its derivatives:

\eq{\fl\Tr\brc{L^{\mathrm{rot}}\brc{z}}^k=H_{k,0}+E_2\brc{z}H_{k,2}+E_2'\brc{z}H_{k,3}+\dots+E_2^{\brc{k-2}}\brc{z}H_{k,k}.}

In this way we obtain the Hamiltonian (\ref{eq:Jhamtop})
\eqlb{eq:Htop}{H^{\mathrm{rot}}=\dfrac12H_{2,0}=\dfrac12 \Tr (L^{\mathrm{rot}})^2-\dfrac12 \Tr S^2 E_2(z,\tau).}

Poisson brackets for variables $s_{mn}$ are defined by the commutator \eqs{[T_{ab},T_{cd}]} (\ref{eq:weilcomm}) of basis elements \eqs{T_{ab}} and \eqs{T_{cd}}
\eqlb{eq:topbrackets}{\{s_{ab},s_{cd}\}=2\rmi\sin\bsq{\dfrac\pi N\brs{bc-ad}}s_{a+c,b+d}.}
Then transition to the standard basis (\ref{eq:weiltr}) gives us
\eqlb{eq:Stopbrackets}{\{ S_{ij},S_{kl}\}=N(S_{kj}\delta_{il}-S_{il}\delta_{kj}).}

Linear brackets (\ref{eq:topbrackets}), (\ref{eq:Stopbrackets}) can be written in terms of the Belavin-Drinfeld classical elliptic $r$-matrix \eqs{r(z)} \cite{Belavin1}, \cite{BelavinDrinfeld}, \cite{KulishSklyanin}. Namely,

\eqlb{eq:toprbrackets}{\bfi{L_1^{\mathrm{rot}}\brc{z_1},L_2^{\mathrm{rot}}\brc{z_2}}=\bsq{r\brc{z_1-z_2},L_1^{\mathrm{rot}}\brc{z_1}+L_2^{\mathrm{rot}}\brc{z_2}},}
where
\eq{L_1\brc{z}=L\brc{z}\otimes Id,\qq L_2\brc{z}=Id\otimes L\brc{z}.}

The classical $r$-matrix is defined by

\eqlb{eq:ellipticr}{r\brc{z}=-\sum_{m,n}\varphi\bsq{\at mn}\brc{z}T_{mn}\otimes T_{-m,-n}.}

Equation (\ref{eq:toprbrackets}) implies the involutivity of the independent coefficients \eqs{H_{k,z}}. Therefore, there are \eqs{N(N+1)/2-1} independent integrals of motion. Note that \eqs{H_{k,k},\quad k\in\{2,\dots,N\}}, are the Casimirs corresponding to the coadjoint orbit (\ref{eq:topps}).

\subsection{Elliptic Calogero-Moser system}

The elliptic Calogero-Moser system (CM) was first introduced in quantum version \cite{Calogero}, \cite{Moser}. The elliptic CM system is defined on the phase space as follows
\eq{\mathcal R^{\mathrm{CM}}=\bfi{\brc{\bld{\mathrm u},\bld{\mathrm v}},\quad \sum_{i=1}^{N}u_i=0,\quad \sum_{i=1}^{N}v_i=0}}
with the canonical symplectic form
\eq{\omega^{\mathrm{CM}}=\brc{\rmd\bld{\mathrm v}\wedge \rmd\bld{\mathrm u}}.}
The corresponding Hamiltonian is defined via
\eq{H^{\mathrm{CM}}=\sum_{i=1}^{N}\dfrac{v_i^2}2+m^2\sum_{i>j}E_2\brc{u_i-u_j,z}.}
The equations of motion defined by the Hamiltonian have the Lax representation
\eq{\dfrac{\rmd L^{\mathrm{CM}}}{\rmd t}=\bsq{L^{\mathrm{CM}},M^{\mathrm{CM}}},}
where the Lax pair can be chosen in the holomorphic form in order to construct the connection between the Calogero-Moser system and the elliptic \tps{$SL(N,\mathbb C)$}{SL(N,C)} top \cite{LOZ}
\eq{L^{\mathrm{CM}}_{ij}=\delta_{ij}v_i+m\brc{1-\delta_{ij}}\phi\brc{u_i-u_j,z},}
\eq{M^{\mathrm{CM}}_{ij}=-\delta_{ij}\sum_{k\neq j}E_2(u_j-u_k)+\left.\pd{\phi(u,z)}u\right|_{u=u_i-u_j}.}

\subsection{Connection between the Calogero-Moser system and the elliptic \tps{$SL(N,\mathbb C)$}{SL(N,C)} top}

\label{sec:connection}

The connection mentioned in the headline was established in \cite{LOZ} in the form of a singular gauge transformation
\eq{L^{\mathrm{rot}}(z)=\Xi(z)L^{\mathrm{CM}}(z)\Xi^{-1}(z).}
This transformation leads to the symplectic map
\eqlb{eq:smap}{\mathcal R^{\mathrm{CM}}\rightarrow\mathcal R^{\mathrm{rot}},\qq \brc{\bld{\mathrm u},\bld{\mathrm v}}\mapsto \bld S.}

In the case when \eqs{N=2} this map has the form
\eqlb{eq:strans2}{\left\{\matr{l}{
s_{01}=-v\dfrac{\theta_{01}(0)\theta_{01}(2u)}{\vartheta'(0)\vartheta(2u)}-
m\dfrac{\theta_{01}^2(0)}{\theta_{00}(0)\theta_{10}(0)}\dfrac{\theta_{00}(2u)\theta_{10}(2u)}{\vartheta^2(2u)},\\
\\
s_{10}=v\dfrac{\theta_{10}(0)\theta_{10}(2u)}{\vartheta'(0)\vartheta(2u)}+
m\dfrac{\theta_{10}^2(0)}{\theta_{00}(0)\theta_{01}(0)}\dfrac{\theta_{00}(2u)\theta_{01}(2u)}{\vartheta^2(2u)},\\
\\
s_{11}=-\rmi v\dfrac{\theta_{00}(0)\theta_{00}(2u)}{\vartheta'(0)\vartheta(2u)}-\rmi
m\dfrac{\theta_{00}^2(0)}{\theta_{10}(0)\theta_{01}(0)}\dfrac{\theta_{10}(2u)\theta_{01}(2u)}{\vartheta^2(2u)}.
}\right.}

\subsection{Toda systems}

Periodic and nonperiodic Toda systems of $N$ interacting particles in the center of a mass frame are defined on the phase space
\eq{\mathcal R^{\mathrm{T}}=\bfi{\brc{\bld{\mathrm u},\bld{\mathrm v}},\quad \sum_{i=1}^{N}u_i=0,\quad \sum_{i=1}^{N}v_i=0},}
with the canonical symplectic form
\eq{\omega^{\mathrm{T}}=\brc{\rmd\bld{\mathrm v}\wedge \rmd\bld{\mathrm u}}.}
The Hamiltonian of the nonperiodic system is
\eq{H^{\mathrm{AT}}=\dfrac12\sum_{i=1}^{N}v_i^2 + 4\pi^2M^2\sum_{i=1}^{N-1}\bld\rme(u_{i+1}-u_{i}),}
and of the periodic system has the form
\eq{H^{\mathrm{PT}}=\dfrac12\sum_{i=1}^{N}v_i^2 + 4\pi^2M^2\sum_{i=1}^{N}\bld{ \mathrm e}(u_{i+1}-u_{i}),\qq u_{N+1}=u_1.}
The equations of motion for both nonperiodic and periodic Toda systems can be written in the Lax form (\cite{Manakov1}, \cite{Flashka1}, \cite{Flashka2})
\eq{\dfrac{\rmd}{\rmd t}L^{\mathrm{AT}}=\bsq{L^{\mathrm{AT}},M^{\mathrm{AT}}},\qq \dfrac{\rmd}{\rmd t}L^{\mathrm{PT}}=\bsq{L^{\mathrm{PT}},M^{\mathrm{PT}}}.}

One can obtain Toda systems by applying the Inozemtsev limit to the Calogero-Moser system \cite{Inozemtsev}.

\section{Elliptic $SL(2,\mathbb C)$ top via the Inozemtsev limit}

The main idea of the technique under consideration is to treat elliptic $SL(2,\mathbb C)$ top coordinates
as functions of coordinates \eqs{(\bld{\mathrm u},\bld{\mathrm v})} of the elliptic Calogero-Moser model and apply
Inozemtsev shift of \eqs{(\bld{\mathrm u},\bld{\mathrm v})}.

\subsection{Periodic Toda system from the elliptic top}

\subsubsection{Limit of the Lax matrix and Poisson algebra}

To obtain the periodic Toda system we combine  the shift of coordinates \eqs{u=U + \tau/4},
scaling of the coupling constant \eqs{M=mq^{\frac14}\quad (q\equiv\bld\rme(\tau))},
the shift of the spectral parameter \eqs{z=\wt z+\tau/2,}
and the trigonometric limit \eqs{q\rightarrow0}. Then from (\ref{eq:strans2}) we derive

\eq{s_{10}=-\dfrac{\rmi v}{\pi}+\bld\Or(q^{\frac14}),}
\eq{s_{01}=\dfrac{M\cos(2\pi U)}{q^{\frac14}}-\dfrac{v\sin(2\pi U)}{\pi}+\bld\Or(q^{\frac14}),}
\eq{s_{11}=-\dfrac{M\sin(2\pi U)}{q^{\frac14}}-\dfrac{v\cos(2\pi U)}{\pi}+\bld\Or(q^{\frac14}).}

Coordinates of the limiting top are scaled coordinates of the elliptic $SL(2,\mathbb C)$ top
\begin{subequations}
\begin{align}
\label{eq:scaled10}&\wt s_{10}=\lim_{q\rightarrow0}s_{10}=-\dfrac{\rmi v}{\pi},\\
\label{eq:scaled01}&\wt s_{01}=\lim_{q\rightarrow0}s_{01}q^{\frac14}=M\cos(2\pi U),\\
\label{eq:scaled11}&\wt s_{11}=\lim_{q\rightarrow0}s_{11}q^{\frac14}=-M\sin(2\pi U).
\end{align}
\end{subequations}

Scaled coordinates (\ref{eq:scaled10}) -- (\ref{eq:scaled11}) form an algebra which arises via contraction of \eqs{\mathfrak{sl}\brc{2,\mathbb C}} algebra
\eqlb{eq:limalg2}{\{\wt s_{10},\wt s_{11}\}=2\rmi \wt s_{01}, \quad \{\wt s_{11},\wt s_{01}\}=0,\quad \{\wt s_{01},\wt s_{10}\}=2\rmi \wt s_{11}.}

Also, formulas (\ref{eq:scaled10}) -- (\ref{eq:scaled11}) define a symplectic map from canonical coordinates \eqs{\brc{U,v}} to the coordinates of the limiting top. Such formulas are known as bosonization formulas.

The following condition defines the symplectic leaf:

\eq{\wt s_{01}^2+\wt s_{11}^2=const=M^2,}
and originates from the Casimir function of the elliptic $SL(2,\mathbb C)$ top
\eq{s_{01}^2+s_{10}^2+s_{11}^2=const=m^2.}
%It is worth noting, that the initial spherical simplectic leaf turns into the cylindrical one.

Taking into account the behavior of the function \eqs{\varphi\bsq{\at mn}(z)} (\ref{eq:varphiexp}) (see %Appendix 
\ref{app:ellipticfunctions})
we can write down the limiting Lax matrix
\eq{\wt L^{\mathrm{rot}}=
4\pi\brc{\matr{cc}{
\dfrac\rmi4 \wt s_{10}&\wt s_{01}\sin(\pi \wt z)-\wt s_{11}\cos(\pi \wt z)\\
\\
\wt s_{01}\sin(\pi \wt z)+\wt s_{11}\cos(\pi \wt z)&-\dfrac\rmi4 \wt s_{10}
}},}
where \eqs{\wt L^{rot}=\lim_{q\rightarrow0} L^{rot}}.

\subsubsection{Limiting equations of motion and bosonization}

We will use formula (\ref{eq:Jhamtop}) for computing the limit of Hamiltonian

\eq{\wt H^{\mathrm{rot}}=-\brc{\wt s_{01}^2\wt J_{01}+\wt s_{10}^2\wt J_{10}+\wt s_{11}^2\wt J_{11}},}
where

\eq{\wt J_{10}=\lim_{q\rightarrow0}J_{10}=\pi^2,}
\eq{\wt J_{01}=\lim_{q\rightarrow0}J_{01}q^{-\frac12}=-8\pi^2,}
\eq{\wt J_{11}=\lim_{q\rightarrow0}J_{11}q^{-\frac12}=8\pi^2.}

The series expansion of \eqs{f\bsq{\at mn}(z)} (\ref{eq:fexp}) (see %Appendix 
\ref{app:ellipticfunctions}) leads to the second Lax matrix

\eq{\wt M^{\mathrm{rot}}=\pi^2\brc{\matr{cc}{
\wt s_{10}&4\brc{\wt s_{01}+\rmi\wt s_{11}}\bld\rme\brc{-\dfrac {\wt z}2}\\
4\brc{\wt s_{01}-\rmi\wt s_{11}}\bld\rme\brc{-\dfrac {\wt z}2}&-\wt s_{10}\\
}}.}

Equations of motion (\ref{eq:toplaxeq}) preserve the same form in the limit:

\eq{\dfrac{\rmd\wt L^{\mathrm{rot}}}{\rmd t}=\bfi{\wt H^{\mathrm{rot}},\wt L^{\mathrm{rot}}}=2\bsq{\wt L^{\mathrm{rot}},\wt M^{\mathrm{rot}}}.}

Using bosonization formulas (\ref{eq:scaled10}), (\ref{eq:scaled01}), and (\ref{eq:scaled11}) one can obtain the periodic Toda system

\eq{\wt H^{\mathrm{rot}}\rightarrow H^{\mathrm{PT}}=v^2+8M^2\pi^2\cos\brc{4\pi U},}

\eq{\wt L^{\mathrm{rot}}\rightarrow L^{\mathrm{PT}}=\brc{\matr{cc}{
v&4\pi M\sin\brc{\pi\brc{2U+\wt z}}\\
\\
-4\pi M\sin\brc{\pi\brc{2U-\wt z}}&-v\\
}},}

\eq{\wt M^{\mathrm{rot}}\rightarrow M^{\mathrm{PT}}=\brc{\matr{cc}{
-\rmi\pi v&4\pi^2 M\bld\rme\brc{-U-\dfrac{\wt z}2}\\
4\pi^2 M\bld\rme\brc{U-\dfrac{\wt z}2}&\rmi\pi v\\
}}.
}

\subsection{Nonperiodic Toda system from the elliptic top}

\subsubsection{Limit of the Lax matrix and Poisson algebra}

Here we will use another shift of coordinates \eqs{u=U + \tau/8},
another scaling of the coupling constant \eqs{M=mq^{1/8}},
the same shift of the spectral parameter \eqs{z=\wt z+\tau/2}
and apply the trigonometric limit \eqs{q\rightarrow0}.
That gives us the following formulas for the elliptic \eqs{SL(2,\mathbb C)} top coordinates

\eq{s_{10}=-\dfrac{\rmi v}{\pi}+\bld\Or(q^{\frac18}),}

\eq{s_{01}=\dfrac{M \bld\rme(U)}{2q^{\frac14}}+\dfrac{\rmi v\bld\rme(U)}{2\pi q^{\frac18}}+\bld\Or(1),}

\eq{s_{11}=\dfrac{\rmi M \bld\rme(U)}{2q^{\frac14}}-\dfrac{v\bld\rme(U)}{2\pi q^{\frac18}}+\bld\Or(1),}
and for the coordinates of the limiting top
\eq{\wt s_{10}=\lim_{q\rightarrow0}s_{10}=-\dfrac{\rmi v}{\pi},}
\eq{\wt s_{01}=\lim_{q\rightarrow0}s_{01}q^{\frac14}=\dfrac12 M \bld\rme(U),}
\eq{\wt s_{11}=\lim_{q\rightarrow0}s_{11}q^{\frac14}=\dfrac\rmi2 M \bld\rme(U).}
These coordinates form the same algebra (\ref{eq:limalg2}) as in the case of the periodic Toda system.

%\eq{\{\wt s_{10},\wt s_{11}\}=2\rmi \wt s_{01}, \quad \{\wt s_{11},\wt s_{01}\}=0,\quad \{\wt s_{01},\wt s_{10}\}=2\rmi \wt s_{11},}
The limiting Lax matrix has the form
\eq{\wt L^{\mathrm{rot}}=
4\pi\brc{\matr{cc}{
\dfrac\rmi4 \wt s_{10}&\wt s_{01}\sin(\pi \wt z)-\wt s_{11}\cos(\pi \wt z)\\
\\
\wt s_{01}\sin(\pi \wt z)+\wt s_{11}\cos(\pi \wt z)&-\dfrac\rmi4 \wt s_{10}
}}.}
In this case the symplectic leaf of the elliptic \eqs{SL(2,\mathbb C)} top turns into
\eqlb{eq:apsymleaf}{\wt s_{01}^2+\wt s_{11}^2=0.}

\subsubsection{Limiting equations of motion and bosonization}

Since we have the same shift of the spectral parameter and the same scaling of coordinates \eqs{s_{mn}} as in the periodic case, \eqs{J_{mn}} have equivalent limits

\eq{\wt J_{10}=\pi^2,\quad
\wt J_{01}=-\wt J_{11}=-8\pi^2.}
Also, limiting Hamiltonian and the second Lax matrix acquire the same forms

\eq{\wt H^{\mathrm{rot}}=-\brc{\wt s_{01}^2\wt J_{01}+\wt s_{10}^2\wt J_{10}+\wt s_{11}^2\wt J_{11}},}
\eq{\wt M^{\mathrm{rot}}=\pi^2\brc{\matr{cc}{
\wt s_{10}&4\brc{\wt s_{01}+\rmi\wt s_{11}}\bld\rme\brc{-\dfrac {\wt z}2}\\
4\brc{\wt s_{01}-\rmi\wt s_{11}}\bld\rme\brc{-\dfrac {\wt z}2}&-\wt s_{10}\\
}}.}
The limiting Hamiltonian can be simplified on the symplectic leaf (\ref{eq:apsymleaf}) as follows

{\eq{\wt H^{\mathrm{rot}}=-\brc{2\wt s_{01}^2\wt J_{01}+\wt s_{10}^2\wt J_{10}}.}
The equations of motion have the Lax representation

\eq{\dfrac{\rmd\wt L^{\mathrm{rot}}}{\rmd t}=\bfi{\wt H^{\mathrm{rot}},\wt L^{\mathrm{rot}}}=2\bsq{\wt L^{\mathrm{rot}},\wt M^{\mathrm{rot}}}.}
Bosonization formulas transform the limiting top into the nonperiodic Toda system

\eq{\wt H^{\mathrm{rot}}\rightarrow H^{\mathrm{AT}}=v^2+4M^2\pi^2\bld\rme(2U),}
\eq{\wt L^{\mathrm{rot}}\rightarrow L^{\mathrm{AT}}=\brc{\matr{cc}{
v&-2\rmi \pi M\bld\rme(U+\frac{\wt z}2)\\
2\rmi \pi M\bld\rme(U-\frac{\wt z}2)&-v
}},}
\eq{\wt M^{\mathrm{rot}}\rightarrow M^{\mathrm{AT}}=\brc{\matr{cc}{
-\rmi \pi v&0\\
4\pi^2 M\bld\rme(U-\frac{\wt z}2)&\rmi\pi v
}}.}

\section{Elliptic \tps{$SL(N>2,\mathbb C)$}{SL(N>2,C)} top via the Inozemtsev limit}

In this section we consider a limit that is a combination of the shift of the spectral parameter \mbox{$z=\wt z+\tau/2$}, the scalings of coordinates, and the trigonometric limit \eqs{Im(\tau)\rightarrow +\infty}. The scalings of coordinates are defined by the limiting behavior of the Lax matrix and are not derived from the symplectic map (\ref{eq:smap}) as in the case $N=2$. Also, the scalings of coordinates satisfy an important requirement, that is the limit of the Poisson algebra of the elliptic \eqs{SL(N,\mathbb C)} top must define a Poisson structure on the phase space of the limiting system. This Poisson structure along with the values of Casimir functions define the symplectic submanifold for which there is the symplectic map to the phase space of the Toda chain. Equations of motion of the limiting system have Lax representation and are equivalent to the equations of motion of the Toda chain.

%As opposed to the case \eqs{N=2} the Poisson algebra of the limiting top can be separated into two sufficiently different disjoint subalgebras. The equations of motion for the elements of the first subalgebra have Lax representation and are equivalent to the equations of motion of the Toda chain, whereas the equations of motion for other elements can be integrated after solving the equations of motion for the elements of the first subalgebra. Thereby, the limiting top is equivalent to the Toda chain with extended phase space, which is the phase space endowed with additional degrees of freedom.

\subsection{Periodic Toda system from the elliptic top}
\label{sub:todaper}
\subsubsection{Limit of the Lax matrix and Poisson algebra}

In order to determine the exact scaling of coordinates we need to expand the function \eqs{\varphi\bsq{\at mn}(z)} as series in $q$, where \eqs{q=e^{2\pi\rmi\tau}} (see %Appendix 
\ref{app:ellipticfunctions}). We obtain

\eqlb{varphiseries1}{\varphi\bsq{\at mn}\brc{\wt z+\dfrac\tau2}=
\left\{\matr{ll}{
-\pi\bld\rme\brc{\dfrac m{2N}}\sin^{-1}\brc{\pi\dfrac mN}+\bld{\mathrm o}(1),&n=0,\\
\\
2\pi\rmi\bld\rme\brc{-\dfrac{n\wt z}N+\dfrac mN}q^{\frac n{2N}}+\bld{\mathrm o}\brc{q^{\frac n{2N}}},&0<n<\dfrac N2,\\
\\
\bld\Or\brc{q^{\frac 14}},&n=\dfrac N2,\\
\\
-2\pi\rmi\bld\rme\brc{\dfrac{N-n}N\wt z}q^{\frac 12-\frac n{2N}}+\bld{\mathrm o}\brc{q^{\frac 12-\frac n{2N}}},&\dfrac N2<n<N.
}\right.
}

Since the Lax matrix for the periodic Toda system can be written in tridiagonal form, the following substitution is reasonable
\eqal{eq:scaling1}{&&s_{mn}=\wt s_{mn} q^{-g(n)},\qq m,n\in\{0,\dots, N-1\},\quad m^2+n^2\neq0,
\cr
&&g(n)=\frac{1-\wt\delta\brc{n}}{2N}.}
This gives us the limiting matrix $\wt L^{\mathrm{rot}}$
\eqal{eq:limitingperL}{&&\wt L^{\mathrm{rot}}=-\pi\sum_{m=1}^{N-1}\bld\rme\brc{\dfrac m{2N}}\sin^{-1}\brc{\pi\dfrac mN} \wt s_{m0}T_{m0}+
\cr
&&+2\pi\rmi\sum_{m=0}^{N-1}\bsq{\bld\rme\brc{-\dfrac{\wt z}N+\dfrac mN}\wt s_{m1}T_{m1}-\bld\rme\brc{\dfrac{\wt z}N}\wt s_{m,N-1}T_{m,N-1}},}
where

\eq{\wt \delta(n)=\left\{\matr{ll}{
1&n\equiv0\bmod N,\\
0&n\not\equiv0\bmod N.\\
}\right.}

As one can see, coordinates \eqs{\wt s_{mn},}\quad \eqs{1<n<N-1,} are not present in the adduced matrix and hence in the Hamiltonian. We will show later that Hamilton equations for these variables can be integrated in spite of the fact that their dynamics are separated from the Lax representation.

After scaling (\ref{eq:scaling1}), we obtain the contraction of Poisson algebra (\ref{eq:topbrackets}) in the limit \eqs{q\rightarrow 0}
\eqlb{eq:scperbrackets}{\{\wt s_{ab},\wt s_{cd}\}=2\rmi\sin\bsq{\dfrac\pi N\brs{bc-ad}}q^{g(b)+g(d)-g(b+d)}\wt s_{a+c,b+d}.}
Therefore, scaled coordinates $\wt s_{mn}$ with the Poisson brackets form an algebra in the limit of $q\rightarrow0$ provided that
\eqlb{eq:brcond}{\forall k,n:\quad g(k)+g(n)-g(k+n)\geqslant0.}

If \eqs{g(n)=\brc{1-\wt\delta\brc{n}}/\brc{2N}}, then (\ref{eq:brcond}) is trivial and we can write down all nonzero brackets corresponding to the equality in (\ref{eq:brcond})
\eq{\{\wt s_{a0},\wt s_{cd}\}=-2\rmi\sin\brc{\dfrac\pi N ad}\wt s_{a+c,d}.}

It is convenient to use the standard basis further. In this basis substitution (\ref{eq:scaling1}) and the Lax matrix turn into
\eq{S_{ij}=\wt S_{ij} q^{-\mathfrak g(i,j)},\qq \mathfrak g(i,j)=\frac{1-\delta_{ij}}{2N}, \qq i,j\in \{1,\dots, N\},}
\eqal{eq:SlimitingperL}{&&\wt L^{\mathrm{rot}}_{ij}=\dfrac{2\pi\rmi}N\sum_{m=1}^{N}\sum_{k=1}^{N-1}
\wt S_{mm}\bld\rme\brc{\dfrac{k(i-m)}N}\brc{\bld\rme\brc{-\dfrac kN}-1}^{-1}\delta_{ij}+\cr
&&+2\pi\rmi\wt S_{i+1,i+2}\bld\rme\brc{-\dfrac{\wt z}N}\wt \delta\brc{j-i-1}-%\cr
2\pi\rmi\wt S_{i,i-1}\bld\rme\brc{\dfrac{\wt z}N}\wt \delta\brc{j-i+1}.}

From now on, indexes of $\wt S$ belong to \eqs{\bfi{1,\dots,N}} and satisfy the properties of periodicity
\eq{\wt S_{N+i,j}=\wt S_{i,N+j}=\wt S_{ij}\qq i,j\in \mathbb Z.}

From (\ref{eq:Stopbrackets}) we obtain the following nonzero Poisson brackets for coordinates in the standard basis:
\eqlb{eq:LStopbr}{\{\wt S_{ii},\wt S_{jk}\}=N(\wt S_{ji}\delta_{ik}-\wt S_{ik}\delta_{ij}).}
Now, it can be easily seen that Casimir functions are
\eqal{eq:Casimirgeneral}{
&&\sum_{i=1}^{N}\wt S_{ii},
\cr
&&\wt S_{i_1,i_2}\wt S_{i_2,i_3}\dots\wt S_{i_k,i_1}\qq\forall j\neq l\quad i_l\neq i_j,\qq 2\leqslant k\leqslant N.}

In (\ref{eq:Casimirgeneral}) there are $N+2$ independent functions depending only on variables that form the Lax matrix (\ref{eq:SlimitingperL})
\eq{\sum_{i=1}^{N}\wt S_{ii},\qq
\prod_{i=1}^{N}\wt S_{i,i-1},\qq
\wt S_{i,i+1}\wt S_{i+1,i}\quad i\in\{1,\dots, N\},}

Also, in (\ref{eq:Casimirgeneral}) there are \eqs{N\brc{N-3}} Casimir functions independent as a functions of variables which are not included in the Lax matrix

\eq{\brc{\prod_{j=1}^{k}\wt S_{i+j-1,i+j}}\wt S_{i+k,i}\qq 1\leqslant i\leqslant N,\qq 2\leqslant k\leqslant N-2.}

Thus, on the symplectic submanifold with nonzero values of quadratic Casimir functions \eqs{\wt S_{i,i+1}\wt S_{i+1,i}} variables which are not included in the Lax matrix are the following functions of variables included in the Lax matrix

\eq{\wt S_{i+k,i}=const\prod_{j=1}^{k}\wt S_{i+j,i+j-1}\qq 1\leqslant i\leqslant N,\qq 2\leqslant k\leqslant N-2.}

\subsubsection{Limiting equations of motion and bosonization}

The limit of Hamiltonian, as it follows from (\ref{eq:Htop}) and the fact that
\eqs{\Tr S^2 E_2\brc{z,\tau}\rightarrow 0}, depends only on variables contributed in the Lax matrix (\ref{eq:limitingperL})
\eqal{eq:periodicham}{&&\wt H^{\mathrm{rot}}=\dfrac12\Tr(\wt L^{\mathrm{rot}})^2=
-\dfrac{\pi^2}N\sum_{m,n=1}^{N}\sum_{k=1}^{N-1}\wt S_{mm}\wt S_{nn}\bld\rme\brc{\dfrac{k(n-m)}N}\brc{1-\cos\brc{2\pi\dfrac kN}}^{-1}+
\cr
&&+4\pi^2\sum_{i=1}^{N}\wt S_{i,i+1}\wt S_{i,i-1}.}

Using the series expansion of \eqs{f\bsq{\at mn}(\wt z+\tau/2)} (see %Appendix 
\ref{app:ellipticfunctions})

\eq{\fl f\bsq{\at mn}\brs{\wt z+\dfrac\tau2}=
\left\{\matr{ll}{
-\pi^2\sin^{-2}\brc{\pi\dfrac mN}+\bld{\mathrm o}(1),&n=0,\\
\\
4\pi^2\bld\rme\brc{\dfrac mN}\bld\rme\brc{-\dfrac{n\wt z}N}q^{\frac n{2N}}+\bld{\mathrm o}\brc{q^{\frac n{2N}}},&0<n<\dfrac{3N}4,\\
\\
4\pi^2\bsq{\bld\rme\brc{\dfrac mN}-\bld\rme\brc{-\dfrac nN+\wt z}}\bld\rme\brc{-\dfrac34\wt z}q^{\frac38}+\bld{\mathrm o}\brc{q^{\frac38}},&n=\dfrac{3N}4,\\
\\
-4\pi^2\bld\rme\brc{-\dfrac mN+\wt z}\bld\rme\brc{-\dfrac{n\wt z}N}q^{\frac32\brc{1-\frac nN}}+\bld{\mathrm o}\brc{q^{\frac32\brc{1-\frac nN}}},&\dfrac{3N}4<n<N,\\
}\right.}
one can obtain the second Lax matrix

\eqlb{eq:SlimitingperM}{\fl\wt M^{\mathrm{rot}}_{ij}=
-\frac{\pi^2}{N}\delta_{ij}\sum_{m=1}^{N-1}\sum_{k=1}^{N}\sin^{-2}\brc{\pi\dfrac mN}\bld\rme\brc{\dfrac{m(i-k)}N}\wt S_{kk}+4\pi^2\wt\delta\brc{j-i-1}\wt S_{i+1,i+2}\bld\rme\brc{-\dfrac{\wt z}N},}
and ensure that the equations of motion can be written in the Lax form

\eqlb{eq:LaxRep}{\dfrac{\rmd}{\rmd t}\wt L^{\mathrm{rot}}=\{\wt H^{\mathrm{rot}},\wt L^{\mathrm{rot}}\}=N\bsq{\wt L^{\mathrm{rot}},\wt M^{\mathrm{rot}}}.}

Those variables that are not included in the Lax matrix (\ref{eq:SlimitingperL}) have simple dynamics

\eqal{eq:simdyn}{&&\dfrac{\rmd}{\rmd t}\wt S_{ij}=4\pi^2\wt S_{ij}\sum_{m=1}^{N}\sum_{k=1}^{N-1}\wt S_{mm}\sin\brc{\pi\dfrac{k(j-i)}{N}}\sin\brc{\pi\dfrac{k(i+j-2m)}{N}}\times
\cr
&&\times\brc{1-\cos\brc{2\pi\dfrac kN}}^{-1},\qq
1<\brc{j-i}\bmod N<N-1,}
and with other coordinates allow bosonization formulas

\eqal{eq:bosonization}{
&&\wt S_{ii}=\dfrac N{2\pi \rmi}(v_{i-1}-v_i),
\cr
&&\wt S_{i,i+1}=MN\bld\rme(u_i),
\cr
&&\wt S_{i+1,i}=MN\bld\rme(-u_i),
\cr
&&\wt S_{i,i+k}=c_{i,i+k}\bld e\brc{\sum_{n=i}^{i+k-1}u_n}\quad 2\leqslant k\leqslant N-2,\quad c_{i,i+k}=const,}
where \eqs{\bld{\mathrm u},\bld{\mathrm v}} are canonical coordinates

\eq{\{v_i,u_j\}=\delta_{ij}\quad i,j\in\{1,\dots, N\},}
and
\eq{\sum_{i=1}^N u_i=0,\qq\sum_{i=1}^N v_i=0.}

Let us show that for \eqs{\bld{\mathrm u},\bld{\mathrm v}} we have dynamics of the periodic Toda system in the center of a mass frame. Substituting (\ref{eq:bosonization}) into Hamiltonian (\ref{eq:periodicham}) and Lax matrices (\ref{eq:SlimitingperL}), (\ref{eq:SlimitingperM}) we obtain

\eq{\wt H^{\mathrm{rot}}=N^2\sum_{i=1}^{N}\dfrac{v_i^2}2+4\pi^2 M^2 N^2\sum_{i=1}^{N}\bld\rme\brc{u_{i+1}-u_i}=N^2 H^{\mathrm{PT}},}
where $H^{\mathrm{PT}}$ has the form of periodic Toda Hamiltonian,

\eq{\fl\wt L^{\mathrm{rot}}=2\pi\rmi MN
\arraycolsep=0pt
\mfs{}{\left(\matr{cccccc}{
\dfrac{v_1}{2\pi\rmi M}&\bld\rme(u_2-\dfrac{\wt z}N)&0&\dots&0&-\bld\rme(-u_N+\dfrac{\wt z}N)\\
-\bld\rme(-u_1+\dfrac{\wt z}N)&\dfrac{v_2}{2\pi\rmi M}&\bld\rme(u_3-\dfrac{\wt z}N)&\dots&0&0\\
0&-\bld\rme(-u_2+\dfrac{\wt z}N)&\dfrac{v_3}{2\pi\rmi M}&\dots&0&0\\
%\hdotsfor[3]{6}\\
\dots&\dots&\dots&\dots&\dots&\dots\\
0&0&0&\dots&\dfrac{v_{N-1}}{2\pi\rmi M}&\bld\rme(u_N-\dfrac{\wt z}N)\\
\bld\rme(u_1-\dfrac{\wt z}N)&0&0&\dots&-\bld\rme(-u_{N-1}+\dfrac{\wt z}N)&\dfrac{v_n}{2\pi\rmi M}
}\right)},}

\eqal{eq:periodicm}{&&\wt M^{\mathrm{rot}}_{ij}=
\dfrac{\rmi\pi}2
\sum_{m=1}^{N-1}\sum_{k=1}^{N}\sin^{-2}\brc{\pi\dfrac mN}\bld\rme\brc{\dfrac{m(i-k)}N}\brc{v_{k-1}-v_k}\delta_{ij}+
\cr
&&+4\pi^2MN\bld\rme\brc{u_{i+1}-\dfrac{\wt z}N}\wt\delta\brc{j-i-1}.}

After the gauge transformation we have

\eqal{eq:gauge}{&&\wt L^{\mathrm{rot}}\rightarrow g^{-1}\wt L^{\mathrm{rot}} g,\qq
\wt M^{\mathrm{rot}}\rightarrow g^{-1}\wt M^{\mathrm{rot}} g+\frac{1}{N}g^{-1}\dot g,
\cr
&&g_{ij}=\delta_{ij}\bld\rme\brc{\dfrac{i\wt z}N}\prod_{k=1}^{i-1}\bld\rme\brc{-u_k},}
and Lax matrices take the standard form

\eq{\fl\wt L^{\mathrm{rot}}=2\pi\rmi MN
\left(\matr{cccccc}{
\dfrac{v_1}{2\pi\rmi M}&\bld\rme(u_2-u_1)&0&\dots&0&-\bld\rme\brs{\wt z}\\
-1&\dfrac{v_2}{2\pi\rmi M}&\bld\rme(u_3-u_2)&\dots&0&0\\
0&-1&\dfrac{v_3}{2\pi\rmi M}&\dots&0&0\\
%\hdotsfor[3]{6}\\
\dots&\dots&\dots&\dots&\dots&\dots\\
0&0&0&\dots&\dfrac{v_{N-1}}{2\pi\rmi M}&\bld\rme(u_N-u_{N-1})\\
\bld\rme(u_1-u_N-\wt z)&0&0&\dots&-1&\dfrac{v_n}{2\pi\rmi M}
}\right),}
\vspace{\baselineskip}
\eq{\fl\wt M^{\mathrm{rot}}=4\pi^2MN
\left(\matr{cccccc}{
0&\bld\rme(u_2-u_1)&0&\dots&0&0\\
0&0&\bld\rme(u_3-u_2)&\dots&0&0\\
0&0&0&\dots&0&0\\
%\hdotsfor[3]{6}\\
\dots&\dots&\dots&\dots&\dots&\dots\\
0&0&0&\dots&0&\bld\rme(u_N-u_{N-1})\\
\bld\rme(u_1-u_N-\wt z)&0&0&\dots&0&0
}\right).}

\subsection{Nonperiodic Toda system from the elliptic top}

\subsubsection{Limit of Lax matrices and Poisson algebra}

To obtain the Lax matrix of the nonperiodic Toda chain we are going to consider another substitution

\eqal{eq:apersub}{S_{ij}=\wt S_{ij} q^{-\mathfrak g(i,j)},
\qq \mathfrak g(i,j)=\dfrac{1-\delta_{ij}-\dfrac12\delta_{i1}\delta_{jN}}{2N}
\qq i,j\in \{1,\dots, N\}.}
Hence, the contraction of Poisson algebra (\ref{eq:Stopbrackets}) in the limit \eqs{q\rightarrow0} takes the form

\eq{\{\wt S_{ij},\wt S_{kl}\}=Nq^{\mathfrak g(i,j)+\mathfrak g(k,i)}\brc{
\delta_{il}\wt S_{kj}q^{-\mathfrak g(k,j)}-\delta_{kj}\wt S_{il}q^{-\mathfrak g(i,l)}},}
and the scaled coordinates form an algebra in the limit \eqs{q\rightarrow0} if

\eq{\forall i,j,k\qq \mathfrak g(i,j)+\mathfrak g(k,i)-\mathfrak g(k,j)\geqslant0.}
As it can be easily seen the above inequality is valid for \eqs{\mathfrak g(i,j)} defined in (\ref{eq:apersub}). Upon the limit \eqs{q\rightarrow 0} we have nonzero brackets (\ref{eq:LStopbr}) and the following Lax matrices:

\eqal{eq:SlimitingaperL}{&&\wt L^{\mathrm{rot}}_{ij}=\dfrac{2\pi\rmi}N\sum_{m=1}^{N}\sum_{k=1}^{N-1}
\wt S_{mm}\bld\rme\brc{\dfrac{k(i-m)}N}\brc{\bld\rme\brc{-\dfrac kN}-1}^{-1}\delta_{ij}+\cr
&&+2\pi\rmi\wt S_{i+1,i+2}\bld\rme\brc{-\dfrac{\wt z}N}\wt \delta\brc{j-i-1}-%\cr
2\pi\rmi\wt S_{i,i-1}\bld\rme\brc{\dfrac{\wt z}N} \delta_{i,j+1},}

\eqal{eq:SlimitingaperM}{&&\wt M^{\mathrm{rot}}_{ij}=
-\frac{\pi^2}{N}\delta_{ij}\sum_{m=1}^{N-1}\sum_{k=1}^{N}\sin^{-2}\brc{\pi\dfrac mN}\bld\rme\brc{\dfrac{m(i-k)}N}\wt S_{kk}+
\cr
&&+4\pi^2\wt\delta\brc{j-i-1}\wt S_{i+1,i+2}\bld\rme\brc{-\dfrac{\wt z}N}.}

As long as the algebra has the same limit as in the periodic case we have Casimir functions (\ref{eq:Casimirgeneral}). But now there are \eqs{N+1} independent functions formed only by variables contributed in the Lax matrix

\eq{\sum_{i=1}^{N}\wt S_{ii},\qq
\prod_{i=1}^{N}\wt S_{i,i+1},\qq
\wt S_{i,i+1}\wt S_{i+1,i}\quad i\in\{1,\dots, N-1\}}
and \eqs{N(N-3)+1} Casimir functions independent as a functions of variables which are not included in the Lax matrix

\eqa{
&&\wt S_{1,N}\wt S_{N,1},
\cr
&&\brc{\prod_{j=1}^{k}\wt S_{i+j-1,i+j}}\wt S_{i+k,i}\qq 1\leqslant i\leqslant N,\qq 2\leqslant k\leqslant N-2.}

\subsubsection{Limiting equations of motion and bosonization}

The limit of Hamiltonian after substitution (\ref{eq:apersub}) is

\eqal{eq:nonperiodicham}{&&\wt H^{\mathrm{rot}}=\dfrac12\Tr(\wt L^{\mathrm{rot}})^2=-\dfrac{\pi^2}N\sum_{m,n=1}^{N}\sum_{k=1}^{N-1}\wt S_{mm}\wt S_{nn}\bld\rme\brc{\dfrac{k(n-m)}N}\brc{1-\cos\brc{2\pi\dfrac kN}}^{-1}+\cr
&&+4\pi^2\sum_{i=2}^{N}\wt S_{i,i+1}\wt S_{i,i-1}.}

The equations of motion can be written in the Lax form

\eq{\dfrac{\rmd}{\rmd t}\wt L^{\mathrm{rot}}=\{\wt H^{\mathrm{rot}},\wt L^{\mathrm{rot}}\}=N\bsq{\wt L^{\mathrm{rot}},\wt M^{\mathrm{rot}}}.}
These equations imply simple dynamics (\ref{eq:simdyn}) for variables that are not included in the Lax matrix. And for all coordinates of the limiting system there are bosonization formulas

\eqal{eq:bosonisation2}{
&&\wt S_{ii}=\dfrac N{2\pi \rmi}(v_{i-1}-v_i),\quad i\in\{1,\dots, N\},
\cr
&&\wt S_{i,i+1}=MN\bld\rme(u_i),\quad i\in\{1,\dots, N\},
\cr
&&\wt S_{i+1,i}=MN\bld\rme(-u_i),\quad i\in\{1,\dots, N-1\},
\cr
&&\wt S_{1,N}=const\,\bld e\brc{-u_N},
\cr
&&\wt S_{i,i+k}=c_{i,i+k}\bld e\brc{\sum_{n=i}^{i+k-1}u_n},\quad2\leqslant k\leqslant N-2,\quad c_{i,i+k}=const.
}

Canonical coordinates \eqs{\bld{\mathrm u}, \bld{\mathrm v}} have dynamics of the nonperiodic Toda chain in the center of a mass frame. After substituting (\ref{eq:bosonisation2}) into Hamiltonian (\ref{eq:nonperiodicham}) we obtain

\eq{\wt H^{\mathrm{rot}}=N^2\sum_{i=1}^{N}\dfrac{v_i^2}2+4\pi^2 M^2 N^2\sum_{i=1}^{N-1}\bld\rme\brc{u_{i+1}-u_i}=N^2 H^{\mathrm{AT}},}
where $H^{\mathrm{AT}}$ has the form of nonperiodic Toda Hamiltonian. Lax matrices take the usual form under gauge transformation (\ref{eq:gauge}) mentioned in the periodic case

\eq{\fl\wt L^{\mathrm{rot}}=2\pi\rmi MN
\left(\matr{cccccc}{
\dfrac{v_1}{2\pi\rmi M}&\bld\rme(u_2-u_1)&0&\dots&0&0\\
-1&\dfrac{v_2}{2\pi\rmi M}&\bld\rme(u_3-u_2)&\dots&0&0\\
0&-1&\dfrac{v_3}{2\pi\rmi M}&\dots&0&0\\
%\hdotsfor[3]{6}\\
\dots&\dots&\dots&\dots&\dots&\dots\\
0&0&0&\dots&\dfrac{v_{N-1}}{2\pi\rmi M}&\bld\rme(u_N-u_{N-1})\\
\bld\rme(u_1-u_N-\wt z)&0&0&\dots&-1&\dfrac{v_n}{2\pi\rmi M}
}\right),}
\vspace{\baselineskip}
\eq{\fl\wt M^{\mathrm{rot}}=4\pi^2MN
\left(\matr{cccccc}{
0&\bld\rme(u_2-u_1)&0&\dots&0&0\\
0&0&\bld\rme(u_3-u_2)&\dots&0&0\\
0&0&0&\dots&0&0\\
%\hdotsfor[3]{6}\\
\dots&\dots&\dots&\dots&\dots&\dots\\
0&0&0&\dots&0&\bld\rme(u_N-u_{N-1})\\
\bld\rme(u_1-u_N-\wt z)&0&0&\dots&0&0
}\right).}

\subsection{More general class of limiting systems}
\label{sec:somesytem}

In the previous Subsections we have considered substitutions of variables (\ref{eq:scaling1}) and (\ref{eq:apersub}), which after applying the Inozemtsev limit lead to the Toda chains. It turns out that the substitutions mentioned above are not the only possibility to provide the integrable systems in the limit. We will consider the following generalization:

\eqal{eq:perscaling}{&s_{mn}=\wt s_{mn} q^{-g(n)},\qq m,n\in\{0,\dots, N-1\},
\quad m^2+n^2\neq0,
\cr
&\qq g(i)=\left\{\matr{ll}{
\dfrac k{2N},& 0\leqslant k\leqslant p<\dfrac N2,\\
\\
\dfrac p{2N},&p < k < N-p,\\
\\
\dfrac {N-k}{2N},&N-p\leqslant k<N,
}\right.}
where \eqs{i\in \mathbb Z}, \eqs{k\equiv i\bmod N}, and prove the integrability of the limiting systems in the case when $N$ and $p$ are relatively prime.

\subsubsection{Limit of Lax matrices and Poisson algebra}
\label{sec:lma}

Scaled coordinates with Poisson brackets (\ref{eq:scperbrackets}) form a Poisson algebra in the limit \eqs{q\rightarrow0} provided that the following condition is valid

\eqlb{eq:pineq}{\forall k,n:\quad g(k)+g(n)-g(k+n)\geqslant0.}
For \eqs{g(n)} under consideration (\ref{eq:pineq}) is proved in %Appendix 
\ref{app:inequality}.

Nonzero limiting brackets have the form

\eqal{eq:lpbrackets}{
&&\{\wt s_{a0},\wt s_{cd}\}=-2\rmi\sin\brc{\dfrac\pi N ad}\wt s_{a+c,d},
\cr\cr
&&\fl\{\wt s_{ab},\wt s_{cd}\}=2\rmi\sin\brc{\dfrac\pi N\brc{bc-ad}}\wt s_{a+c,b+d}
\cr
&&\quad(0<a\leqslant p)\wedge(0<b\leqslant p)\wedge(0<a+b\leqslant p),
\cr\cr
&&\fl\{\wt s_{ab},\wt s_{cd}\}=2\rmi\sin\brc{\dfrac\pi N\brc{bc-ad}}\wt s_{a+c,b+d}
\cr
&&\quad(N-p\leqslant a<N)\wedge(N-p\leqslant b<N)\wedge(2N-p\leqslant a+b<2N),
}
or in the standard basis
\eqal{eq:lpstbrackets}{
&&\fl\{\wt S_{ii},\wt S_{jk}\}=N(\wt S_{ji}\delta_{ik}-\wt S_{ik}\delta_{ij}),
\cr\cr
&&\fl\{\wt S_{ij},\wt S_{kl}\}=N(\wt S_{kj}\delta_{il}-\wt S_{il}\delta_{kj})
\cr
&&(0<(j-i)\bmod N\leqslant p)\wedge(0<(l-k)\bmod N\leqslant p)\wedge(0<(j+l-i-k)\bmod N\leqslant p),
\cr\cr
&&\fl\{\wt S_{ij},\wt S_{kl}\}=N(\wt S_{kj}\delta_{il}-\wt S_{il}\delta_{kj})
\cr
&&(N-p\leqslant(j-i)\bmod N< N)\wedge(N-p\leqslant(l-k)\bmod N<N)\wedge
\cr
&&\wedge(N-p\leqslant(j+l-i-k)\bmod N<N),
}
where as usual \eqs{\wedge} stands for ``and''.

Formulas (\ref{eq:lpbrackets}) (or (\ref{eq:lpstbrackets})) imply that the limiting Poisson algebra is solvable. Thus, there is no general method to construct all Casimir functions, but in the special case when $N$ and $p$ are relatively prime we able to present the whole set of independent Casimir functions.

At first we are interested in Casimir functions in general case when \eqs{p<N/2}. Since elements \eqs{\wt S_{i,i+k},\quad k\in\{p,\dots,N-p\}}, have nonzero brackets only with coordinates \eqs{\wt S_{ii},\;\wt S_{i+k,i+k}}, we obtain the second and $N$'th order Casimir functions

\eqal{eq:psimplecas}{
&&\wt S_{i,i+k}\wt S_{i+k,i}\quad k\in\bfi{p,\dots,\flr{\frac N2}},\quad i\in\{1,\dots,N\},
\cr
&&\prod_{i=1}^{N}\wt S_{i,i+k}\quad k\in\{p,\dots,N-p\},\quad i\in\{1,\dots,N\},}
where \eqs{\flr{x}} is the floor function of $x$.

In the case when $N$ and $p$ are relatively prime we can also construct the following \eqs{N(N-2p-1)} independent Casimir functions

\eq{\fl\brc{\prod_{j=1}^k\wt S_{i+(j-1)p,i+jp}}\wt S_{i+kp,i}\quad p<kp \bmod N<N-p,\quad k,i\in\{1,\dots,N\}.}

If \eqs{p<\brc{N-1}/2} it is convenient to consider two disjoint subalgebras. The first one is generated by the variables

\eq{\mathcal S_1=\bfi{\wt S_{ij},\;p<\brc{j-i}\bmod N<N-p}}
with simple dynamics, which we are going to show further. Equations of motion for the elements of the second subalgebra

\eq{\fl\mathcal S_2=\bfi{\wt S_{ij},\;(0\leqslant\brc{j-i}\bmod N\leqslant p)\;\textrm{or}\; (N-p\leqslant \brc{j-i}\bmod N<N)}}
have Lax representation, so we need to obtain the number of independent Casimir functions from universal enveloping algebra of this subalgebra. Since Casimir functions lower the dimension of the symplectic submanifold, we need to majorize the degeneration factor of Poisson tensor \eqs{\pi^{(ij)(kl)}\brs{\mathcal S_2}}:

\eq{\bfi{F\brs{\mathcal S_2},G\brs{\mathcal S_2}}=\pi^{(ij)(kl)}\brs{\mathcal S_2}\;\partial_{(ij)}F\;\partial_{(kl)}G,\qq \partial_{(ij)}=\pd{}{\wt S_{ij}}.}

Formula (\ref{eq:lpstbrackets}) implies that Poisson tensor can be represented as a block
lower matrix with respect to antidiagonal (\ref{eq:pitens}). Rank $R$ of this
\eqs{(2p+1)N\times(2p+1)N}-matrix satisfies the condition \eqs{R\geqslant 2p(N-1)\;} (\ref{eq:symprank}) (see %Appendix 
\ref{app:dsl}), which restricts the number of the independent Casimir functions up to \eqs{N+2p} (here we treat \eqs{\ds\sum_{i=1}^{N}\wt S_{ii}} as a Casimir).

Linear brackets (\ref{eq:lpbrackets}) and (\ref{eq:lpstbrackets}) can be written in terms of an $r$-matrix. Namely,

\eqlb{eq:prbrackets}{\bfi{\wt L_1^{\mathrm{rot}}\brc{\wt z_1},\wt L_2^{\mathrm{rot}}\brc{\wt z_2}}=\bsq{r\brc{\wt z_1-\wt z_2},\wt L_1^{\mathrm{rot}}\brc{\wt z_1}+\wt L_2^{\mathrm{rot}}\brc{\wt z_2}},}
where

\eq{\wt L_1\brc{z}=\wt L\brc{z}\otimes Id,\qq \wt L_2\brc{z}=Id\otimes \wt L\brc{z},}
and matrix \eqs{\wt r\brc{\wt z}} is the limit of elliptic $r$-matrix (\ref{eq:ellipticr})

\eqa{&&\wt r\brc{\wt z_1-\wt z_2}=\lim_{Im\brc{\tau}\rightarrow+\infty}r\brc{z_1-z_2}
=\pi\sum_{m=1}^{N-1}\brc{\cot\dfrac {\pi m}N-\cot\brc{\pi(\wt z_1-\wt z_2)}}T_{m0}\otimes T_{-m,0}-
\cr
&&-\pi\sin^{-1}\brc{\pi(\wt z_1-\wt z_2)}\bld\rme\brc{\dfrac{\wt z_1-\wt z_2}2}
\sum_{n=1}^{N-1}\bld\rme\brc{-\dfrac{n(\wt z_1-\wt z_2)}N}\sum_{m=0}^{N-1}T_{mn}\otimes T_{-m,-n}.}
Explicit expression for the elements of \eqs{\wt r\brc{\wt z}}

\eqa{&&\wt r_{(ii_1),(jj_1)}(\wt z)=\pi\sum_{m=1}^{N-1}\bld\rme\brc{\dfrac{m(i-i_1)}N}\brc{\cot\dfrac{\pi m}N+\rmi}\delta_{ij}\delta_{i_1j_1}+
\cr
&&+\dfrac{2\pi\rmi N \bld\rme(\wt z)}{1-\bld\rme(\wt z)}\bld\rme\brc{-\wt z\dfrac{(i_1-i)\bmod N}N}\delta_{ij_1}\delta_{i_1j},}
where we exclude one summand proportional to \eqs{Id\otimes Id}.

Substitution (\ref{eq:perscaling}) preserves in the limit coordinates of the Lax matrix with respect to the following sin-algebra basis elements:

\eq{T_{mn},\quad n\in\{-p,\dots,p\},}
which gives

\eqal{eq:lmatrixp}{&&\wt L^{\mathrm{rot}}=-\pi\sum_{m=1}^{N-1}\bld\rme\brc{\dfrac m{2N}}\sin^{-1}\brc{\pi\dfrac mN}\wt s_{m0}T_{m0}+
\cr
&&+2\pi\rmi\sum_{l=1}^p\sum_{m=0}^{N-1}\bsq{\bld\rme\brc{-\dfrac{l\wt z}N+\dfrac mN}\wt s_{ml}T_{ml}-\bld\rme\brc{\dfrac{l\wt z}N}\wt s_{m,-l}T_{m,-l}}.}
In the standard basis the Lax matrix acquires the form

\eqa{&&\wt L_{ij}^{\mathrm{rot}}=2\pi\rmi\sum_{l=1}^p\brc{\wt S_{i+1,i+1+l}\bld\rme\brc{-\dfrac{l\wt z}N}\dd\brc{j-i-l}-\wt S_{i+l,i}\bld\rme\brc{\dfrac{l\wt z}N}\dd\brc{j-i+l}}+
\cr
&&+\dfrac{2\pi\rmi}N\sum_{m=1}^{N}\sum_{k=1}^{N-1}\wt S_{mm}\bld\rme\brc{\dfrac{k\brc{i-m}}N}\brc{\bld\rme\brc{-\dfrac kN}-1}^{-1}\delta_{ij}.}

It is convenient to use both the above and gauge transformed form of the Lax matrix

\eqlb{eq:gaugetrp}{\wt L^{\mathrm{rot}}_g=g\wt L^{\mathrm{rot}}g^{-1},\qq g_{ij}=\delta_{ij}\bld\rme\brc{\dfrac{i\wt z}N}.}
Denoting \eqs{w=\bld\rme\brc{\wt z}} we obtain

\eq{\wt L^{\mathrm{rot}}_g=2\pi\rmi\mfs{\small}{
\arraycolsep=0pt
\brc{
\matr{cccccccccccccc}{
%1
l_1&\wt S_{23}&\dots&\wt S_{2,2+p}&0&\dots&0&-w\wt S_{1,N-p+1}&\dots&-w\wt S_{1N}\\
\\%2
-\wt S_{21}&l_2&\wt S_{34}&\dots&\wt S_{3,3+p}&0&\dots&0&\ddots&\vdots\\
\\%3
\vdots&-\wt S_{32}&\ddots&\ddots&\dots&\ddots&0&\dots&0&-w\wt S_{pN}\\
\\%4
-\wt S_{p+1,1}&\dots&\ddots&\ddots&\wt S_{p+2,p+3}&\dots&\ddots&0&\dots&0\\
\\%5
0&-\wt S_{p+2,2}&\dots&-\wt S_{p+2,p+1}&l_{p+2}&\ddots&\dots&\ddots&0&\vdots\\
\\%6
\vdots&0&\ddots&\dots&\ddots&\ddots&\ddots&\dots&\ddots&0\\
\\%7
0&\dots&0&\ddots&\dots&\ddots&l_{N-p}&\wt S_{N-p+1,N-p+2}&\dots&\wt S_{N-p+1,1}\\
\\%8
\dfrac{\wt S_{N-p+2,2}}w&0&\dots&0&\ddots&\dots&-\wt S_{N-p+1,N-p}&\ddots&\ddots&\vdots\\
\\%9
\vdots&\ddots&0&\dots&0&\ddots&\dots&\ddots&l_{N-1}&\wt S_{N,1}\\
\\%10
\dfrac{\wt S_{N+1,2}}w&\dots&\dfrac{\wt S_{1,p+1}}w&0&\dots&0&-\wt S_{N,N-p}&\dots&-\wt S_{N,N-1}&l_N\\

}}}
,
}

where

\eq{l_i=\frac1{2\pi\rmi}L_{ii}=\dfrac1N\sum_{m=1}^{N}\sum_{k=1}^{N-1}\wt S_{mm}\bld\rme\brc{\dfrac{k\brc{i-m}}N}\brc{\bld\rme\brc{-\dfrac kN}-1}^{-1},}
and consequently

\eqlb{eq:lstrans}{\wt S_{ii}=l_{i-1}-l_i.}

\subsubsection{Lax representation of limiting equations of motion}

Limiting Hamiltonian and the second Lax matrix have similar structures as in the case when the limiting system is equivalent to the periodic Toda chain (see Subsection \ref{sub:todaper}), namely

\eqal{eq:psecham}{&&\wt H^{\mathrm{rot}}=\dfrac12\Tr\brc{\wt L^{\mathrm{rot}}}^2=-\dfrac{\pi^2}N\sum_{m,n=1}^{N}\sum_{k=1}^{N-1}\wt S_{mm}\wt S_{nn}\bld\rme\brc{\dfrac{k\brc{n-m}}N}\brc{1-\cos\brc{2\pi\dfrac kN}}^{-1}+\cr
&&+4\pi^2\sum_{l=1}^p\sum_{i=1}^N\wt S_{i+l,i}\wt S_{i+1,i+l+1},}

\eqa{&&\wt M^{\mathrm{rot}}_{ij}=-\dfrac{\pi^2}N\ds\sum_{m=1}^N\ds\sum_{k=1}^{N-1}\wt S_{mm}\sin^{-2}\brc{\pi\dfrac kN}\bld\rme\brc{\dfrac{k(i-m)}N}\delta_{ij}+
\cr
&&+4\pi^2\ds\sum_{l=1}^p\wt S_{i+1,i+l+1}\bld\rme\brc{-\dfrac{l\wt z}N}\wt\delta(j-i-l).}
When \eqs{p=1} these formulae turn into (\ref{eq:periodicham}) and (\ref{eq:periodicm}), respectively.

After the limit equations of motion also have Lax representation

\eqlb{eq:PLaxRep}{\dfrac{\rmd}{\rmd t}\wt L^{\mathrm{rot}}=\{\wt H^{\mathrm{rot}},\wt L^{\mathrm{rot}}\}=N\bsq{\wt L^{\mathrm{rot}},\wt M^{\mathrm{rot}}}.}

In the case when \eqs{p<\brc{N-1}/2} there are variables \eqs{\mathcal S_1} which are not included in the Lax pair. Hamilton equations for these variables are

\eqa{&&\dfrac{\rmd}{\rmd t}\wt S_{ij}=4\pi^2\wt S_{ij}\sum_{m=1}^{N}\sum_{k=1}^{N-1}\wt S_{mm}\sin\brc{\pi\dfrac{k(j-i)}{N}}\sin\brc{\pi\dfrac{k(i+j-2m)}{N}}\times
\cr
&&\times\brc{1-\cos\brc{2\pi\dfrac kN}}^{-1},\qq p<(j-i)\bmod N<N-p
.}
%They can be integrated after solving (\ref{eq:PLaxRep}) exactly like in the case of the periodic Toda chain.

On the symplectic submanifold with nonzero values of the second order Casimir functions
\eqs{\wt S_{i,i+p}\wt S_{i+p,i},\quad i\in\{1,\dots,N\}} in the case when $N$ and $p$ are relatively prime variables $\mathcal S_1$ are the following functions of variables $\mathcal S_2$

\eq{\fl\wt S_{i+kp,i}=const \prod_{i=1}^k\wt S_{i+jp,i+(j-1)p},\quad p<kp\bmod N<N-p,\quad k,i\in\{1,\dots,N\}.}
Thus, if we solve the equations of motion (\ref{eq:PLaxRep}), we immediately obtain the solutions of the equations of motion for variables $\mathcal S_1$.

\subsubsection{Integrability}

The Lax operator of the elliptic \eqs{SL\brc{N,\mathbb C}} top satisfies properties of quasi-periodicity (\ref{eq:topqp}). Namely,

\eq{L^{\mathrm{rot}}\brc{z+1}=T_{10}L^{\mathrm{rot}}\brc{z}T_{10}^{-1},\qq
L^{\mathrm{rot}}\brc{z+\tau}=T_{01}L^{\mathrm{rot}}\brc{z}T_{01}^{-1}.}

After taking the trigonometric limit \eqs{Im\brc{\tau}\rightarrow+\infty} the Lax operator has only one quasi-period

\eq{\wt L^{\mathrm{rot}}\brc{\wt z+1}=T_{10}\wt L^{\mathrm{rot}}\brc{\wt z}T_{10}^{-1}.}

Since \eqs{\Tr\brc{\wt L^{\mathrm{rot}}\brc{\wt z}}^k} are periodic functions in \eqs{\wt z}, they can be expanded in Fourier basis \eqs{\bfi{\bld\rme\brc{j\wt z}\equiv w^j,\;j\in\mathbb Z}}. From the gauge transformed Lax matrix \eqs{\wt L_g^{\mathrm{rot}}} it follows that there are finite number of nonzero coefficients in this expansion.

\begin{prop}
\label{prop:trform}
The trace of the $k$-th power of the Lax matrix has the form

\eqlb{eq:prop1}{\fl\Tr\brc{\wt L^{\mathrm{rot}}\brc{\wt z}}^k=\sum_{j=-M}^{M}H_{kj}w^j,\quad\textrm{where}\quad M=\flr{\dfrac{kp}N},\; w\equiv\bld\rme\brc{\wt z}.}

\end{prop}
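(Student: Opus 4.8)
The plan is to combine gauge invariance of the trace with the explicit banded-plus-corners structure of the gauge-transformed Lax matrix $\wt L^{\mathrm{rot}}_g$ from (\ref{eq:gaugetrp}), reducing the statement to a winding-number count. Since (\ref{eq:topqp}) survives the limit as $\wt L^{\mathrm{rot}}(\wt z+1)=T_{10}\wt L^{\mathrm{rot}}(\wt z)T_{10}^{-1}$, we already know that $\Tr\brc{\wt L^{\mathrm{rot}}(\wt z)}^k$ is $1$-periodic in $\wt z$ and hence a Laurent series in $w=\bld\rme(\wt z)$; the actual content of the proposition is the bound $\abs j\leqslant M$ on the exponents.

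First I would read off the $w$-content of each entry of $\wt L^{\mathrm{rot}}_g$. From the displayed gauge-transformed matrix, every nonzero entry $\brc{\wt L^{\mathrm{rot}}_g}_{ij}$ is a constant times a power of $w$, and that power equals $0$ in the central band $\abs{i-j}\leqslant p$, equals $+1$ in the upper-right corner block $j-i\geqslant N-p$, and equals $-1$ in the lower-left corner block $i-j\geqslant N-p$, all other entries being zero. Introducing for each nonzero entry the residue $\sigma(i,j)\in\{-p,\dots,p\}$ of $j-i$ modulo $N$, this says compactly that the exponent of $w$ in $\brc{\wt L^{\mathrm{rot}}_g}_{ij}$ is $\brc{(j-i)-\sigma(i,j)}/N$, with $i,j$ the actual labels in $\{1,\dots,N\}$; this expression does take the integer values $0,\pm1$ above. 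This step is pure bookkeeping: it amounts to matching the fractional factors $\bld\rme(\pm l\wt z/N)$ of the standard-basis Lax matrix (\ref{eq:lmatrixp}) against the diagonal gauge, so that the fractional powers cancel on the band and survive as $w^{\pm1}$ exactly on the wrap-around diagonals.

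Next, using $\Tr\brc{\wt L^{\mathrm{rot}}}^k=\Tr\brc{\wt L^{\mathrm{rot}}_g}^k$, I would expand
\[\Tr\brc{\wt L^{\mathrm{rot}}_g}^k=\sum_{i_1,\dots,i_k}\brc{\wt L^{\mathrm{rot}}_g}_{i_1i_2}\brc{\wt L^{\mathrm{rot}}_g}_{i_2i_3}\cdots\brc{\wt L^{\mathrm{rot}}_g}_{i_ki_1}\]
and track the exponent of $w$ contributed by each closed index loop $\brc{i_1,\dots,i_k,i_{k+1}=i_1}$. By the previous paragraph this exponent is $\sum_{a=1}^{k}\brc{(i_{a+1}-i_a)-\sigma_a}/N$ with $\sigma_a=\sigma(i_a,i_{a+1})$. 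The telescoping sum $\sum_a(i_{a+1}-i_a)$ vanishes because the loop is closed, so the exponent collapses to $-\tfrac1N\sum_{a=1}^{k}\sigma_a$; it is automatically an integer, being a sum of the per-edge integers $0,\pm1$. This telescoping — the observation that the accumulated exponent around a closed loop depends only on the total short-shift $\sum_a\sigma_a$ and not on the individual indices — is the one genuinely substantive point; everything surrounding it is elementary.

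Finally, since each $\sigma_a\in\{-p,\dots,p\}$ (which is exactly the bandwidth guaranteed by substitution (\ref{eq:perscaling}), keeping only the basis elements $T_{mn}$ with $n\in\{-p,\dots,p\}$ in the limiting Lax matrix), we have $\abs{\sum_{a=1}^k\sigma_a}\leqslant kp$, so every nonzero monomial $w^j$ in the trace has integer exponent $j$ with $\abs j\leqslant kp/N$, hence $\abs j\leqslant\flr{kp/N}=M$. Collecting the monomials of equal degree defines the coefficients $H_{kj}$ and gives (\ref{eq:prop1}). I do not expect any serious obstacle beyond the careful identification of the per-entry exponents at the wrap-around diagonals; once those are pinned down, the telescoping and the bandwidth bound $\abs{\sigma_a}\leqslant p$ deliver the result immediately.
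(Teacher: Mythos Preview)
Your argument is correct and is, at heart, the same mechanism the paper uses: the Lax matrix has ``shift content'' bounded by $p$, traces force the total shift to be a multiple of $N$, and together these bound the exponent of $w$ by $\flr{kp/N}$. The difference is only one of basis. The paper works in the sin-algebra basis, writing $\wt L^{\mathrm{rot}}=\sum_{n=-p}^{p}\sum_m c(m,n)\wt s_{mn}w^{-n/N}T_{mn}$ and invoking the trace property $\Tr\bigl(\prod_i T_{m_in_i}\bigr)\neq0\Rightarrow\sum_i n_i\equiv0\pmod N$; your $\sigma_a$ is exactly the $n_i$ of the paper, and your telescoping identity $\sum_a(i_{a+1}-i_a)=0$ is the standard-basis incarnation of that trace condition. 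The paper's route is slightly shorter because the sin-algebra packaging makes the exponent $-\sum n_i/N$ visible without needing the gauge transformation or the per-entry bookkeeping on the corner blocks; your route, on the other hand, makes the integrality of each term manifest from the outset and never leaves integer powers of $w$. Either way, the substantive step is the single inequality $\bigl|\sum\sigma_a\bigr|\leqslant kp$ combined with divisibility by $N$.
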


\begin{proof}
Replacing \eqs{\bld\rme\brc{\wt z}} by $w$ in formula (\ref{eq:lmatrixp}) we obtain a convenient form of the Lax matrix

\eqa{
&&\wt L^{\mathrm{rot}}=-\pi\sum_{m=1}^{N-1}\bld\rme\brc{\dfrac m{2N}}\sin^{-1}\brc{\pi\dfrac mN}\wt s_{m0}T_{m0}
+2\pi\rmi\sum_{n=1}^p\sum_{m=0}^{N-1}\bsq{\bld\rme\brc{\dfrac mN} w^{-\frac nN}\wt s_{mn}T_{mn}-w^{\frac nN}\wt s_{m,-n}T_{m,-n}}=
\cr
&&=\sum_{n=-p}^p\sum_{m=0}^{N-1}\bld c\brc{m,n}\wt s_{mn}w^{-\frac nN}T_{mn}.}
Then

\eqlb{eq:genlk}{\fl\brc{\wt L^{\mathrm{rot}}\brc{\wt z}}^k=\sum_{m_1,n_1}\ldots\sum_{m_k,n_k}w^{-\sum_{i=1}^k\frac{n_i}N}\prod_{i=1}^k\bld c\brc{m_i,n_i}\wt s_{m_in_i}T_{m_in_i}.}

By the properties of \eqs{T_{mn}} (see %Appendix 
\ref{app:veylbasis}) the following condition holds

\eqlb{eq:prop1cond}{\Tr\brc{\prod_{i=1}^{k}T_{m_in_i}}\neq0\quad\Rightarrow\quad\sum_{i=1}^k n_i\equiv0\bmod N\quad\Leftrightarrow
\quad\sum_{i=1}^k\dfrac{n_i}N\;\in\mathbb Z.}

As \eqs{n_i\in\bfi{-p,\dots,p}} for any $i$, we derive the second condition \eqs{\ds\abs{\sum_{i=1}^{k}n_i}\leqslant kp}, which along with (\ref{eq:prop1cond}) implies (\ref{eq:prop1}).
\end{proof}

Representation (\ref{eq:prbrackets}) of linear brackets (\ref{eq:lpbrackets}) and (\ref{eq:lpstbrackets}) provides us with the involutivity of the coefficients \eqs{H_{kj}}.

\begin{prop}$\cite{Kheshin}.$
\label{prop:involution}
The coefficients \eqs{H_{kj}} are in involution, i.e.,
\eqlb{eq:prop2}{\bfi{H_{k_1j_1},H_{k_2j_2}}=0.}
\end{prop}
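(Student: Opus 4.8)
The plan is to derive the involutivity of the Fourier coefficients $H_{kj}$ directly from the $r$-matrix representation (\ref{eq:prbrackets}) of the limiting Poisson brackets, following the standard argument that a linear $r$-matrix structure forces the spectral invariants of the Lax operator to Poisson-commute. First I would form the Poisson bracket of two full traces $\Tr\brc{\wt L^{\mathrm{rot}}\brc{\wt z_1}}^{k_1}$ and $\Tr\brc{\wt L^{\mathrm{rot}}\brc{\wt z_2}}^{k_2}$ at distinct spectral points. Using the Leibniz rule together with (\ref{eq:prbrackets}), this bracket can be written as a trace of commutators of the form $\bsq{r\brc{\wt z_1-\wt z_2},\wt L_1^{\mathrm{rot}}\brc{\wt z_1}+\wt L_2^{\mathrm{rot}}\brc{\wt z_2}}$ sandwiched between appropriate powers of $\wt L_1^{\mathrm{rot}}$ and $\wt L_2^{\mathrm{rot}}$ in the tensor product space.

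The key computation is the identity
\eq{\bfi{\Tr\brc{\wt L^{\mathrm{rot}}\brc{\wt z_1}}^{k_1},\Tr\brc{\wt L^{\mathrm{rot}}\brc{\wt z_2}}^{k_2}}=
k_1k_2\,\Tr_{12}\brc{\brc{\wt L_1^{\mathrm{rot}}}^{k_1-1}\brc{\wt L_2^{\mathrm{rot}}}^{k_2-1}\bfi{\wt L_1^{\mathrm{rot}}\brc{\wt z_1},\wt L_2^{\mathrm{rot}}\brc{\wt z_2}}},}
where $\Tr_{12}$ is the trace over both tensor factors. Substituting (\ref{eq:prbrackets}) and using the cyclicity of the trace, each of the two terms generated by the commutator with $r\brc{\wt z_1-\wt z_2}$ cancels against its counterpart: the commutator structure means $\bsq{r,\wt L_1+\wt L_2}$ contributes $r\brc{\wt L_1+\wt L_2}-\brc{\wt L_1+\wt L_2}r$, and after multiplying by the commuting powers $\brc{\wt L_1}^{k_1-1}\brc{\wt L_2}^{k_2-1}$ and taking the total trace, the cyclic property lets one move $r$ around the trace so that the two contributions are equal and opposite. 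Hence the whole expression vanishes identically in $\wt z_1,\wt z_2$.

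Since $\bfi{\Tr\brc{\wt L^{\mathrm{rot}}\brc{\wt z_1}}^{k_1},\Tr\brc{\wt L^{\mathrm{rot}}\brc{\wt z_2}}^{k_2}}=0$ as a function of the two spectral parameters, I would then expand both traces in the Fourier basis using Proposition \ref{prop:trform}, writing $\Tr\brc{\wt L^{\mathrm{rot}}\brc{\wt z_i}}^{k_i}=\sum_{j_i}H_{k_ij_i}w_i^{j_i}$ with $w_i=\bld\rme\brc{\wt z_i}$. The vanishing of the double sum $\sum_{j_1,j_2}\bfi{H_{k_1j_1},H_{k_2j_2}}w_1^{j_1}w_2^{j_2}$ for all $\wt z_1,\wt z_2$, together with the linear independence of the monomials $w_1^{j_1}w_2^{j_2}$, forces each coefficient $\bfi{H_{k_1j_1},H_{k_2j_2}}$ to vanish, which is exactly (\ref{eq:prop2}).

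The main subtlety I anticipate is justifying that the $r$-matrix identity survives the degeneration: the bracket relation (\ref{eq:prbrackets}) must hold with the \emph{limiting} $r$-matrix $\wt r\brc{\wt z}$ and the limiting Lax operator, and one must check that the term proportional to $Id\otimes Id$ that was dropped in the explicit expression for $\wt r$ does not affect the commutator (it commutes with everything, so its contribution to $\bsq{\wt r,\wt L_1+\wt L_2}$ is zero). A secondary point worth verifying is that the cancellation argument uses only the antisymmetric part of $r$ under exchange of tensor factors and the cyclicity of $\Tr_{12}$, neither of which is spoiled by the solvable (non-semisimple) structure of the limiting algebra; the argument is purely formal and does not rely on any nondegeneracy of the Killing form, so it goes through verbatim. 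Because this is precisely the classical computation underlying the statement attributed to \cite{Kheshin}, I expect the proof to be short once the $r$-matrix identity is in place.
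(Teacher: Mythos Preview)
Your proposal is correct and follows essentially the same approach as the paper: use the $r$-matrix relation (\ref{eq:prbrackets}) together with cyclicity of the trace to show that $\bfi{\Tr\brc{\wt L^{\mathrm{rot}}(\wt z_1)}^{k_1},\Tr\brc{\wt L^{\mathrm{rot}}(\wt z_2)}^{k_2}}=0$, then expand via Proposition~\ref{prop:trform} and use linear independence of the monomials $w_1^{j_1}w_2^{j_2}$ to isolate each $\bfi{H_{k_1j_1},H_{k_2j_2}}$. The paper's own proof is extremely terse (it just states the trace identity and invokes (\ref{eq:prbrackets})), so your write-up simply supplies the standard details the paper leaves to the reference; your remarks about the $Id\otimes Id$ term and the purely formal nature of the cyclicity argument are accurate and not needed beyond what you already said.
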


\begin{proof}

Exactly as in the case of the elliptic top we have

\eq{\fl\bfi{\Tr\brc{\wt L^{\mathrm{rot}}(\wt z_1)}^{k_1},\Tr\brc{\wt L^{\mathrm{rot}}(\wt z_2)}^{k_2}}=
\Tr\bfi{\brc{\wt L^{\mathrm{rot}}(\wt z_1)}^{k_1},\brc{\wt L^{\mathrm{rot}}(\wt z_2)}^{k_2}}.}

Then, it follows from (\ref{eq:prbrackets}) that these functions Poisson commute. Using the expansion (\ref{eq:prop1}) we get the involutivity of the coefficients (\ref{eq:prop2}).

\end{proof}

In particular, all functions \eqs{H_{kj}} Poisson commute with the Hamiltonian (\ref{eq:psecham}). Moreover, among coefficients \eqs{H_{kj}} we have Casimir functions of the form \eqs{H_{k(j),\pm j},\;k(j)=\cl{j\frac{N}p},\;j\in\{1,\dots,p\}} (\eqs{\cl{x}} is the ceiling function of $x$). The latter statement is proved in Proposition \ref{prop:casimirs} below.  It is possible to visualize these Casimirs and integrals of motion in the form of the following triangle

\eqlb{eq:pyramid}{\fl\mfs{\small}{\matr{ccccccccc}{
&&&&H_{20}&&&&
\\
&&&&\vdots&&&&
\\
&&&\bld H_{\cl{\frac{N}p},-1}&H_{\cl{\frac{N}p},0}&\bld H_{\cl{\frac{N}p},1}&&&
\\
&&&\vdots&\vdots&\vdots&&&
\\
&&\bld H_{\cl{\frac{2N}p},-2}&H_{\cl{\frac{2N}p},-1}&H_{\cl{\frac{2N}p},0}&H_{\cl{\frac{2N}p},1}&\bld H_{\cl{\frac{2N}p},2}&&
\\
&&\vdots&\vdots&\vdots&\vdots&\vdots&&
\\
&\vdots&\vdots&\vdots&\vdots&\vdots&\vdots&\vdots&
\\
\bld H_{N,-p}&\vdots&\vdots&\vdots&\vdots&\vdots&\vdots&\vdots&\bld H_{Np}
}}}
If $N$ and $p$ are relatively prime, Casimir functions \eqs{H_{N,\pm p}} are proportional to those introduced in (\ref{eq:psimplecas})

\eq{H_{N,\pm p}\propto\prod_{i=1}^{N}\wt S_{i,i\mp p}.}

Also, one can note that \eqs{H_{N,p}},\;\eqs{H_{N,-p}} and the second order Casimir functions \eqs{\wt S_{i,i+p}\wt S_{i+p,i},\; i\in\bfi{1,\dots,N}}, are not independent.

\begin{prop}
\label{prop:casimirs}
The coefficients \eqs{H_{k(j),\pm j},\;k(j)=\cl{jN/p},\;j\in\{1,\dots,p\},} of expansion (\ref{eq:prop1}) are Casimir functions.
\end{prop}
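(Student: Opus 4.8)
The plan is to prove the statement in the strong form $\{H_{k(j),\pm j},\wt S_{ab}\}=0$ for \emph{every} generator $\wt S_{ab}$; commutation with the other integrals $H_{k'j'}$ is already supplied by Proposition~\ref{prop:involution}, so the real content is commutation with the remaining phase-space coordinates. The organizing tool is a $\mathbb Z$-grading by ``charge'', assigning to $\wt S_{ij}$ the balanced representative of $(j-i)\bmod N$. Reading off (\ref{eq:lpbrackets}) and (\ref{eq:lpstbrackets}), I would first record two structural facts: on the Lax variables the bracket is charge-additive \emph{without wraparound}, since a bracket of two off-diagonal generators survives the limit only when their charges share a sign and their sum still lies in $\{-p,\dots,p\}$, all mixed-sign and out-of-range brackets dropping out; and consequently an off-diagonal Lax generator has vanishing bracket with the whole non-Lax set $\mathcal S_1$, the two being coupled only through the diagonal $\wt S_{ii}$.

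Next I would pin down the internal structure of $H_{k(j),j}$. By (\ref{eq:genlk}) and Proposition~\ref{prop:trform} it collects the length-$k(j)$ monomials $\prod_i\wt s_{m_in_i}$ with $\sum_i n_i=-jN$ and $k(j)=\cl{jN/p}$. Because this length is \emph{minimal} (one checks $j=\flr{k(j)p/N}$, and the slack $r:=k(j)p-jN$ obeys $0\le r<p$, with $r=0$ exactly when $j=p$), the target charge $-jN$ is nearly saturated: no charge-$0$ factor and no factor of the opposite sign can appear. Thus $H_{k(j),j}$ is a charge-homogeneous polynomial in single-sign off-diagonal Lax variables alone. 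This already disposes of three of the four cases of $\{H_{k(j),j},\wt S_{ab}\}$. It vanishes for $\wt S_{ab}\in\mathcal S_1$ by the second fact above; it vanishes for every diagonal $\wt S_{aa}$ because each monomial is a closed walk $i_1\to\cdots\to i_{k(j)}\to i_1$ on which a diagonal acts by $N(\#\mathrm{in}-\#\mathrm{out})=0$ at each vertex; and it vanishes for opposite-sign off-diagonal $\wt S_{ab}$ (mixed-sign brackets drop out) and for same-sign off-diagonal $\wt S_{ab}$ of charge $|\beta|>r$ (the result would fall below the global charge minimum $-k(j)p$).

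The hard part is the one surviving family: same-sign off-diagonal generators of small charge $1\le|\beta|\le r$, which occur precisely when $j<p$. Here the grading no longer forbids the output, and indeed a \emph{single} cyclic-product monomial of $H_{k(j),j}$ fails to commute with such $\wt S_{ab}$ --- a short edge can fuse with $\wt S_{ab}$ and break the closed walk --- so the cancellation must be collective, among the structure-constant-weighted sum. I would close this by the mechanism behind Proposition~\ref{prop:involution}: the $r$-matrix relation (\ref{eq:prbrackets}) expresses $\{\Tr(\wt L^{\mathrm{rot}}(\wt z_1))^{k(j)},\wt L^{\mathrm{rot}}(\wt z_2)\}$ as a trace of a commutator built from $\wt r(\wt z_1-\wt z_2)$, whose matrix entries are exactly the brackets of $\Tr(\wt L^{\mathrm{rot}})^{k(j)}$ against all Lax generators. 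Since $\wt S_{ab}$ is independent of the spectral parameter, $\{H_{k(j),j},\wt S_{ab}\}$ is the coefficient of the top power $w_1^{j}$ of this matrix, and the claim becomes the vanishing of that extreme Fourier coefficient identically in $\wt z_2$.

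I expect the main obstacle to be exactly this last $w_1$-degree bookkeeping: one has to bound the highest power of $w_1$ carried jointly by $(\wt L^{\mathrm{rot}}(\wt z_1))^{k(j)-1}$ and by the limiting $r$-matrix, and verify that the commutator leaves no $w_1^{j}$ term --- this is where the minimality of $k(j)$ must be used once more. As a consistency check I would treat the corner cases $j=p$ (where $r=0$ and no dangerous generators exist) by the grading argument alone, and confirm directly that $H_{k(j),\pm j}$ is then proportional to the order-$N$ product Casimirs $\prod_i\wt S_{i,i\mp p}$ of (\ref{eq:psimplecas}), matching the identification $H_{N,\pm p}\propto\prod_i\wt S_{i,i\mp p}$ already recorded above.
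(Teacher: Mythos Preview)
Your plan is correct, and its decisive step---the $w_1$-degree bookkeeping in the $r$-matrix expression for $\{\Tr(\wt L^{\mathrm{rot}}(\wt z_1))^{k(j)},\wt L^{\mathrm{rot}}(\wt z_2)\}$---is exactly what the paper does. The paper, however, goes there directly without your preliminary charge-grading case analysis: it writes the bracket via (\ref{eq:prbrackets}) as
\[
k\sum_{m,n}\wt\varphi\bsq{\at mn}(\wt z_1-\wt z_2)\,\Tr\brc{(\wt L^{\mathrm{rot}}(\wt z_1))^{k-1}T_{mn}}\,[T_{-m,-n},\wt L^{\mathrm{rot}}(\wt z_2)],
\]
expands this in the standard basis (formula (\ref{eq:kbrexp})), and then bounds the $w_1$-degree of each matrix entry. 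The key device is an auxiliary matrix $K$ with $K|_{w_1=w_2}=(\wt L^{\mathrm{rot}}(\wt z_2))^{k-1}$, so that $[K,\wt L^{\mathrm{rot}}(\wt z_2)]$ acquires a factor $(w_1-w_2)$; after dividing, all three terms of (\ref{eq:kbrexp}) are Laurent polynomials in $w_1$ with degrees in $[1-j,j-1]$, and the coefficient of $w_1^{\pm j}$ vanishes. Your ``easy'' cases (diagonal, opposite-sign, large same-sign charge) are thus all subsumed in a single computation, so the case split buys you nothing for the Lax variables $\mathcal S_2$.

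What your route does add is the commutation with $\mathcal S_1$: your observation that $H_{k(j),\pm j}$ contains no charge-$0$ factor (by the minimality of $k(j)$), hence Poisson-commutes with every non-Lax generator, is not made explicit in the paper's proof, which only tests against $\wt L^{\mathrm{rot}}(\wt z_2)$. If ``Casimir'' is read for the full contracted algebra rather than just for $\mathcal S_2$, you need that sentence. One small imprecision in your sketch: the $r$-matrix bracket is not ``a trace of a commutator'' but a matrix whose entries involve traces of $(\wt L^{\mathrm{rot}})^{k-1}T_{mn}$; keeping this straight matters when you actually carry out the degree count, since it is the individual matrix entries whose $w_1$-range must be bounded.
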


\begin{proof}

Functions \eqs{H_{k(j),\pm j}} are the coefficients of the terms \eqs{w_1^{\pm j}\equiv \bld\rme\brc{\pm j \wt z_1}} of the expansion of \eqs{\Tr \brc{\wt L^{\mathrm{rot}}(\wt z_1)}^{k(j)}} in (\ref{eq:prop1}). To prove the statement of the proposition we will show that terms \eqs{w_1^{\pm j}} are not present in the expansion of \eqs{\bfi{\Tr \brc{\wt L^{\mathrm{rot}}(\wt z_1)}^{k(j)},\wt L^{\mathrm{rot}}(\wt z_2)}} as a series in \eqs{w_1}. Using representation (\ref{eq:prbrackets}) of linear brackets (\ref{eq:lpbrackets}), (\ref{eq:lpstbrackets}) and the following form of $r$-matrix

\eq{\wt r\brc{\wt z}=\sum_{m,n}\wt \varphi\bsq{\at mn}\brc{\wt z}T_{mn}\otimes T_{-m,-n},}
where

\eq{\fl\wt\varphi\bsq{\at mn}(\wt z)=\left\{\matr{lll}{
\pi\cot\brc{\dfrac{\pi m}N}-\cot\brc{\pi\wt z},&n=0,&0<m<N,\\
\\
-\dfrac{\pi}{\sin\brc{\pi\wt z}}\bld\rme\brc{\dfrac{\wt z}2-\dfrac{n\wt z}N},&0<n<N,&0\leqslant m<N,
}\right.}
we obtain

\eqa{&&\bfi{\brc{\wt L^{\mathrm{rot}}(\wt z_1)}^k,\wt L^{\mathrm{rot}}(\wt z_2)}=\sum_{m,n}\wt\varphi\bsq{\at mn}(\wt z_1-\wt z_2)\bsq{T_{mn},\brc{\wt L^{\mathrm{rot}}(\wt z_1)}^k}\otimes T_{-m,-n}+
\cr
&&+\sum_{mn}\wt\varphi\bsq{\at mn}(\wt z_1-\wt z_2)\brc{\sum_{i=1}^k \brc{\wt L^{\mathrm{rot}}(\wt z_1)}^{i-1}T_{mn}\brc{\wt L^{\mathrm{rot}}(\wt z_1)}^{k-i}}
\otimes\bsq{T_{-m,-n},\wt L^{\mathrm{rot}}(\wt z_2)}.}
This gives

\eqal{eq:trbrackets}{&&\bfi{\Tr \brc{\wt L^{\mathrm{rot}}(\wt z_1)}^k,\wt L^{\mathrm{rot}}(\wt z_2)}
=k\sum_{m,n}\wt \varphi\bsq{\at mn}(\wt z_1-\wt z_2)\Tr\brc{\brc{\wt L^{\mathrm{rot}}(\wt z_1)}^{k-1}T_{mn}}\times
\cr
&&\times\bsq{T_{-m,-n},\wt L^{\mathrm{rot}}(\wt z_2)}.}

Substituting the explicit expressions of Sin-Algebra basis elements \eqs{T_{mn}} and \eqs{\wt\varphi\bsq{\at mn}(\wt z)} into formula (\ref{eq:trbrackets}) we get

\eqal{eq:kbrexp}{&&\bfi{\Tr \brc{\wt L^{\mathrm{rot}}(\wt z_1)}^k,\wt L^{\mathrm{rot}}(\wt z_2)_{i_1 j_1}}
=\pi k\sum_{m=1}^{N-1}\sum_{i=1}^N \cot\brc{\dfrac{\pi m}N} \bld\rme\brc{\dfrac{i m}N} \brc{\wt L^{\mathrm{rot}}(\wt z_2)}_{i_1 j_1}\times
\cr
&&\times\brc{\brc{\wt L^{\mathrm{rot}}(\wt z_1)}^{k-1}}_{ii}\brc{\bld\rme\brc{-\dfrac{i_1 m}N}-\bld\rme\brc{-\dfrac{j_1 m}N}}+
\cr
&&+\pi\rmi kN\brc{\wt L^{\mathrm{rot}}(\wt z_2)}_{i_1 j_1}\sum_{i=1}^{N}\brc{\brc{\wt L^{\mathrm{rot}}(\wt z_1)}^{k-1}}_{i i}\brc{\delta_{ii_1}-\delta_{ij_1}}-
2\pi\rmi kN\dfrac{w_1}{w_1-w_2}\bsq{K,\wt L^{\mathrm{rot}}(\wt z_2)}_{i_1 j_1},}
where

\eqa{&& K_{i_1j_1}=\bld\rme\brc{-\dfrac{\brc{(i_1-j_1)\bmod N}\brc{\wt z_1-\wt z_2}}N}\brc{\brc{\wt L^{\mathrm{rot}}(\wt z_1)}^{k-1}}_{i_1j_1}=
\cr
&&=\brc{\dfrac{w_2}{w_1}}^{\frac{(i_1-j_1)\bmod N}N}\brc{\brc{\wt L^{\mathrm{rot}}(\wt z_1)}^{k-1}}_{i_1j_1}.}

By the properties of \eqs{T_{mn}}

\eq{\brc{\prod_l T_{m_l n_l}}_{i_1j_1}\propto\bld\rme\brc{\dfrac{i_1\sum_l m_l}N}\wt \delta\brc{i_1+\sum_l n_l-j_1}.}
Therefore, each element of matrix \eqs{\bld K} is a Laurent polynomial in \eqs{w_1}, which can be seen from expression (\ref{eq:genlk}). More precisely,

\eqa{&& K_{i_1j_1}=\sum_{m_1,n_1}\ldots\sum_{m_{k-1},n_{k-1}}w_2^{\frac{(i_1-j_1)\bmod N}N} w_1^{-\frac{(i_1-j_1)\bmod N}N-\sum_{l=1}^{k-1}\frac{n_l}N}\times
\cr
&&\times\wt \delta\brc{i_1+\sum_{l=1}^{k-1}n_l-j_1}\prod_{l=1}^{k-1}\bld c'\brc{m_l,n_l}\wt s_{m_ln_l},}
and the degree of $w_1$ is the integer

\eqal{eq:kdegw}{&&-\dfrac{(i_1-j_1)\bmod N}N-\sum_{l=1}^{k-1}\dfrac{n_l}N=-\dfrac{(i_1-j_1)\bmod N}N
-\dfrac{x N+(j_1-i_1)\bmod N}N=
\cr
&&=-(x+1-\delta_{i_1j_1}),\qq x\in\mathbb Z.}

To derive the maximum \eqs{x_{\mathrm{max}}} and minimum \eqs{x_{\mathrm{min}}} of $x$ we use the condition

\eq{\abs{\sum_{l=1}^{k-1}n_l}=\abs{xN+(j_1-i_1)\bmod N}\leqslant (k-1)p.}
Consequently,

\eq{x_{\mathrm{max}}=\flr{\dfrac{(k-1)p-(j_1-i_1)\bmod N}N},
\qq
x_{\mathrm{min}}=-\flr{\dfrac{(k-1)p+(j_1-i_1)\bmod N}N}.}

In the case when \eqs{k=\cl{jN/p}} we have \eqs{\flr{(k-1)p/N}=j-1} and hence

\eq{x_{\mathrm{max}}\leqslant j-1,\qq x_{\mathrm{min}}\geqslant -j.}
Moreover, \eqs{x_{\mathrm{min}}=1-j} for the diagonal elements of matrix \eqs{\bld K}. According to formula (\ref{eq:kdegw}) we obtain the maximum and minimum degrees $d_{\mathrm{max}}$ and $d_{\mathrm{min}}$ of $w_1$  in matrix \eqs{\bld K}, respectively,

\eq{d_{\mathrm{min}}=-j,\qq d_{\mathrm{max}}=j-1.}
Hence, each element of matrix \eqs{\bsq{K,\wt L^{\mathrm{rot}}(\wt z_2)}} is a Laurent polynomial in $w_1$ with maximum and minimum degrees $d_{\mathrm{max}}$ and $d_{\mathrm{min}}$. Due to the fact that

\eq{\bld K|_{w_1=w_2}=\brc{\wt L^{\mathrm{rot}}(\wt z_2)}^{k-1},}
polynomials \eqs{w_1^j\bsq{K,\wt L^{\mathrm{rot}}(\wt z_2)}_{i_1j_1}} have root \eqs{w_1=w_2} and

\eq{-2\pi\rmi N\dfrac{w_1}{w_1-w_2}\bsq{K,\wt L^{\mathrm{rot}}(\wt z_2)}_{i_1 j_1}}
(the last term in (\ref{eq:kbrexp})) is a Laurent polynomial in $w_1$ with degrees from the interval \eqs{\bsq{1-j,j-1}}.
Also, the other two terms in (\ref{eq:kbrexp}) contain only the diagonal elements of matrix \eqs{\brc{\wt L^{\mathrm{rot}}(\wt z_1)}^{k-1}} and thus they are Laurent polynomials in $w_1$ with degrees from the same interval \eqs{\bsq{1-j,j-1}}. Therefore, terms \eqs{w_1^{\pm j}} are not present in the expansion (\ref{eq:kbrexp}).
\end{proof}

Furthermore, it is convenient to treat \eqs{H_{kj}} as functions of the following set of variables

\eq{\fl\mathcal S=\bfi{l_i,\quad i\in\bfi{1,\dots,N},\quad\sum_i l_i=0;\qq S_{jk},\quad j\neq k,\quad j,k\in\bfi{1,\dots,N}},}
where there are \eqs{N-1} independent coordinates $l_i$ and \eqs{N-1} independent coordinates $S_{jj}$, wich are connected through nondegenerate transformation (\ref{eq:lstrans}).

\begin{prop}
\label{prop:independence}
If $N$ and $p$ are relatively prime, then the coefficients \eqs{H_{kj},\;k>0,\;\flr{j N/p}\leqslant k\leqslant N,\;\abs{j}<p,} the second order Casimir functions \eqs{\wt S_{i,i+p}\wt S_{i+p,i},} \quad \eqs{1\leqslant i\leqslant N,} and \eqs{H_{Np}} are functionally independent.
\end{prop}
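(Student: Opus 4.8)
The plan is to establish functional independence by showing that the differentials of the listed functions are linearly independent at a generic point of the symplectic submanifold, i.e. that the Jacobian of the corresponding map has maximal rank. Every function in the list is a polynomial in the coordinates $\wt S_{ij}$ and $l_i$ — the $H_{kj}$ by Proposition \ref{prop:trform} and the walk expansion (\ref{eq:genlk}), the $\wt S_{i,i+p}\wt S_{i+p,i}$ manifestly — so the set on which all maximal minors of the Jacobian vanish is Zariski closed, and it suffices to exhibit one point at which a single maximal minor is nonzero. Two gradings will drive the computation: the weight $k$, under which each $H_{kj}$ is homogeneous of degree $k$ in the $\wt S$ (with $l_i$ of degree one), the $\wt S_{i,i+p}\wt S_{i+p,i}$ of degree two, and $H_{Np}$ of degree $N$; and the power $j$ of $w=\bld\rme\brc{\wt z}$, which corresponds to the net winding of the closed walks enumerated by $\Tr\brc{\wt L^{\mathrm{rot}}}^k$.

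The first concrete step is to isolate the corner block formed by $H_{Np}$ and the second order Casimirs $C_i:=\wt S_{i,i+p}\wt S_{i+p,i}$. Since $\gcd(N,p)=1$, the shift $i\mapsto i+p$ is a single $N$-cycle on $\mathbb Z/N$, so $H_{Np}\propto\prod_{i=1}^N\wt S_{i,i-p}=\prod_{i=1}^N a_i$ with $a_i:=\wt S_{i+p,i}$, while $C_i=b_ia_i$ with $b_i:=\wt S_{i,i+p}$. These $N+1$ functions involve only the outer band variables $a_i,b_i$. Taking $b_i$ as the pivot of $C_i$ and $a_1$ as the pivot of $H_{Np}$, the submatrix of partial derivatives in the variables $b_1,\dots,b_N,a_1$ is triangular with determinant proportional to $\brc{\prod_{i\neq1}a_i}\prod_i a_i$, nonzero for generic outer band values. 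Because these functions are independent of the inner band variables $\wt S_{i,i\pm l}$ $(l<p)$ and of the diagonal $l_i$, the corner block occupies a block-triangular corner of the full Jacobian and factors out; it remains to treat the inner block formed by the $H_{kj}$ with $\abs{j}<p$.

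For the inner block I would run a leading-monomial argument governed by the two gradings. Scaling all off-diagonal coordinates by a parameter $t$ separates functions by their minimal off-diagonal content. At $t=0$ the column $j=0$ degenerates, $H_{k0}\to\brc{2\pi\rmi}^k\sum_i l_i^k$, so on the hyperplane $\sum_i l_i=0$ the functions $H_{20},\dots,H_{N0}$ have a Vandermonde Jacobian in $l_1,\dots,l_{N-1}$, nonsingular for distinct $l_i$; the functions with $j\neq0$ vanish to positive order in $t$ and decouple at this order. For $0<\abs{j}<p$ the coefficient $H_{kj}$ is supported on closed walks of length $k$ and winding $j$; the minimal admissible length $\cl{\abs{j}N/p}$ is realized — thanks to $\gcd(N,p)=1$, which forces $\abs{j}N/p\notin\mathbb Z$ — by an essentially rigid all-$\pm p$ walk, recovering the edge Casimirs of the triangle (\ref{eq:pyramid}), with the surplus length distributed among the remaining band entries and diagonal self-steps. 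To each admissible pair $(k,j)$ I would attach a distinguished monomial built from these walks, carried by a distinct coordinate direction, and read off the corresponding leading-order Jacobian.

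The main obstacle is precisely the combinatorial heart of the last step: proving that the distinguished monomials attached to the pairs $(k,j)$ with $\abs{j}<p$ are pairwise distinct and are not reproduced by any other function in the list, so that the pivot assignment injects the functions into the coordinates and the inner Jacobian is triangular with nonzero diagonal at a generic point. This is where coprimality of $N$ and $p$ is indispensable — it guarantees the single $N$-cycle structure of the $p$-shift and the uniqueness up to cyclic rotation of the minimal-winding walks, both needed for the monomials to separate the functions; without it shorter loops would appear, spoiling both the count and the separation. Once this distinctness lemma is in place, the nonsingular corner block, the Vandermonde $j=0$ column, and the triangular $j\neq0$ part combine into a nonvanishing maximal minor of the Jacobian, and the asserted functional independence follows on a dense open set.
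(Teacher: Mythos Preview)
Your overall architecture matches the paper's: a block--triangular Jacobian, a Vandermonde block for the $j=0$ column on the diagonal variables $l_i$, and a corner block for $H_{Np}$ together with the quadratic Casimirs $\wt S_{i,i+p}\wt S_{i+p,i}$ pivoting on the outer $\pm p$ band. That part is fine and essentially what the paper does (the paper splits your corner block into two separate diagonal blocks, one for $H_{Np}$ and a diagonal one for the $C_i$, but this is cosmetic).

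The genuine gap is your treatment of the inner columns $0<\abs{j}<p$. You propose to attach a ``distinguished monomial'' to each pair $(k,j)$ and hope for a triangular Jacobian with nonzero diagonal, but you explicitly flag the required distinctness lemma as the main obstacle and do not prove it. In the paper this block is \emph{not} triangular in any obvious monomial sense, and the argument is different and more structured. The paper first passes to leading terms $H'_{kj}$ (highest degree in $l_i$, then highest degree in the $\pm p$ band), and then introduces composite variables
\[
X_{i\pm(k(j)-1)p,\,i}=\Bigl(\prod_{m=1}^{k(j)-1}\wt S_{i\pm(m-1)p,\,i\pm mp}\Bigr)\wt S_{i\pm(k(j)-1)p,\,i},\qquad k(j)=\Bigl\lceil\tfrac{jN}{p}\Bigr\rceil,
\]
whose independence uses $\gcd(N,p)=1$. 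In these variables the $(m,n)$ entry of the diagonal block $\bld A(j)$ is the complete homogeneous symmetric polynomial $h_{m-1}$ in the overlapping window $\{l_{1+(n-1)p+rp}:0\le r\le k(j)-1\}$, and the determinant is computed explicitly (Appendix~E) as a product $\prod(l_{i_1}-l_{i_2})$ over two index sets, nonzero for distinct $l_i$. So the missing ingredient in your outline is precisely this: rather than a monomial/triangular argument, one needs the passage to the $X$ variables and the identification of the block entries as complete homogeneous symmetric polynomials, after which the determinant is a concrete (and nonvanishing) polynomial in the $l_i$. Without that, your ``distinctness lemma'' remains an assertion, and the proof is incomplete.
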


\begin{proof}

Coefficients \eqs{H_{kj}} are the $k$'th order homogeneous polynomials in variables \eqs{\wt S_{mn}}. To prove the independence we are going to consider the terms with the maximum degree of the variables \eqs{\bfi{l_i,\;1\leqslant i\leqslant N}}. These terms in turn contain ones with the maximum degree of the variables \eqs{\bfi{\wt S_{i,i\pm p},\;1\leqslant i\leqslant N}}. Let us denote these terms by \eqs{H_{kj}'}. Then the independence of \eqs{H_{kj}} follows from the independence of \eqs{H_{kj}'}. Moreover, the statement of the proposition follows from the independence of \eqs{H_{kj}',\;k>0,\;\flr{j N/p}\leqslant k\leqslant N,\;\abs{j}<p,} and the second order Casimir functions (the second order Casimir functions are monomials and remain the same after taking terms with the maximum degree of any set of variables). After taking the leading terms in Casimirs we obtain

\eq{H_{10}'\propto\sum_{i=1}^{N}l_i,}

\eqlb{eq:casterms}{H_{k(j),\mp j}'\propto\sum_{i=1}^{N}\brc{\prod_{m=1}^{k(j)-1}\wt S_{i\pm(m-1)p,i\pm mp}}\wt S_{i\pm(k(j)-1)p,i},
\qq k(j)=\cl{\dfrac{Nj}p},\qq 0<j<p.}

By the properties of Casimirs as the coefficients in (\ref{eq:prop1}), \eqs{H_{kj}'} are homogeneous polynomials in $l_i$'s and the summands of (\ref{eq:casterms}), so it is convenient to treat these summands as new variables

\eq{ X_{i\pm(k(j)-1)p,i}=\brc{\prod_{m=1}^{k(j)-1}\wt S_{i\pm(m-1)p,i\pm mp}}\wt S_{i\pm(k(j)-1)p,i},\qq
1\leqslant i\leqslant N,\qq 0<j<p.}
The expression

\eq{\wt S_{i\pm(k(j)-1)p,i}=\wt S_{i\pm \flr{\frac{Nj}p}p,i}=\wt S_{i\pm Nj\mp (Nj)\bmod p,i}=\wt S_{i\mp(Nj)\bmod p,i}}
implies the independence of $X$ in the case when $N$ and $p$ are relatively prime and \eqs{{0<j<p}}.

Therefore, we introduce the set of variables

\eq{\fl\mathcal S^1=\bfi{l_i,\;1\leqslant i\leqslant N;\quad X_{i,i\pm j},\;(1\leqslant i\leqslant N)\wedge(0<j<p);\quad \wt S_{i,i\pm p},\;1\leqslant i\leqslant N}}
and expand the differentials of the functions \eqs{H_{kj}'}, \quad \eqs{1<k\leqslant N,} \quad \eqs{\abs{j}<p,}\quad \eqs{H_{Np}'}, and the second order Casimirs in the basis of the differentials of \eqs{\mathcal S^1}. We treat the differentials of \eqs{l_i,} \quad \eqs{1\leqslant i\leqslant N}, as independent due to the presence of \eqs{H_{10}'}, which determines the connection between them. That leads to the Jacobian matrix of the map from \eqs{\mathcal S^1} to Hamiltonians and Casimirs. The independence of functions under consideration follows directly from the fact that the Jacobian matrix has maximum rank. After appropriate ordering of variables, Hamiltonians, and Casimirs one can write the Jacobian matrix in a block lower triangular form

\eqlb{eq:Jacobi}{\fl
\bordermatrix{
&l_1\dots l_N&\dots&\bn{\bld X(j)}&\bn{\bld X(-j)}&\dots&\wt S_{pN}&\bfi{\wt S_{i,i+p}}\cr
\bn{\bld H'(0)}
&\bld V&\bld0&\bld0&\bld0&\bld0&\bld0&\bld0\cr
\vdots&\vdots&\ddots&\bld0&\bld0&\bld0&\bld0&\bld0\cr
\bn{\bld H'(j)}
&\vdots&\dots&\bld A\brc{j}&\bld0&\bld0&\bld0&\bld0\cr
\bn{\bld H'(-j)}
&\vdots&\dots&\dots&\bld A(j)&\bld0&\bld0&\bld0\cr
\cr
\vdots&\vdots&\dots&\dots&\dots&\ddots&\bld0&\bld0\cr
\cr
H_{Np}'&\bld0&\bld0&\bld0&\bld0&\bld0&b&\bld0\cr
\cr
\bn{\bld C_2}&\bld0&\bld0&\bld0&\bld0&\bld0&\vdots&\bld D\cr
}.}

Since we care about the rank of the Jacobian matrix, without loss of generality we can neglect numerical common factors in each \eqs{H_{k,j}'} and use the ordering

\eq{
\bld H'(0)=\brc{\matr{c}{
H_{10}'\\
\vdots\\
H_{N0}'\\
}};\qq
\bn{\bld H'(j)}=\brc{\matr{c}{
H_{k(\abs j),j}'\\
\vdots\\
H_{Nj}'\\
}},\;0<\abs j<p;
\qq
\bn{\bld C_2}=\brc{\matr{c}{
\wt S_{1,1+p}\wt S_{1+p,1}\\
\vdots\\
\wt S_{N,N+p}\wt S_{N+p,N}\\
}}.}
Also, in (\ref{eq:Jacobi}) we denote

\eq{\fl\bld X(j)=\bfi{X_{i-(Nj)\bmod p,i},\;i=1+mp,\;0\leqslant m\leqslant N-k(j)},\quad 0<j<p,}

\eq{\fl\bld X(-j)=\bfi{X_{i,i-(Nj)\bmod p},\;i=1+mp,\;0\leqslant m\leqslant N-k(j)},\quad 0<j<p,}
and sort out variables in amount equal to the number of elements in \eqs{\bld H'(\pm j)} so that \eqs{\bld A(j)} are square matrices,

\eq{\bfis{\wt S_{i,i+p}}=\bfis{\wt S_{i,i+p},\;1\leqslant i\leqslant N}.}

Due to the fact that every diagonal block of the matrix (\ref{eq:Jacobi}) is square, we can calculate the determinant of (\ref{eq:Jacobi}). The first diagonal block is a square Vandermonde matrix

\eq{\bld V=\brc{\matr{ccc}{
1&\dots&1\\
l_1&\dots&l_N\\
\vdots&&\vdots\\
l_1^{N-1}&\dots&l_N^{N-1}\\
}}}
with the well-known determinant \eqs{\ds\det V=\prod_{1\leqslant i< j\leqslant N}\brc{l_j-l_i}}.

The blocks \eqs{\bld A(j)} are more complicated. Every element \eqs{A_{mn}(j)} of \eqs{\bld A(j)} is a complete homogeneous symmetric polynomial (see \cite{Macdonald}) \eqs{h_{m-1}(n,j)} in \eqs{\bfi{l_{1+(n-1)p+rp},\;0\leqslant r\leqslant k(j)-1}}, i.e.,

\eq{
A_{mn}(j)=h_{m-1}(n,j),
\qq
\sum_{m=0}^{+\infty}h_m(n,j)t^m=\prod_{r=0}^{k(j)-1}\brc{1-t\, l_{1+(n-1)p+rp}}^{-1}.}
The determinants of these matrices are

\eq{
\det\bld A(j)=\prod_{i_1>i_2}\brc{l_{i_1}-l_{i_2}},
}
\eq{i_1=\bfi{1+mp,\quad k(j)\leqslant m\leqslant 2\brc{k(j)-1}},
\qq
i_2=\bfi{1+mp,\quad 0\leqslant m\leqslant k(j)},}
(\ref{eq:symdet}) (see %Appendix
\ref{app:det}).
The last two diagonal blocks in (\ref{eq:Jacobi}) are simple:

\eq{\ds b=\pd{H'_{Np}}{\wt S_{pN}}=\pd{H_{Np}}{\wt S_{pN}}\propto\prod_{i\neq p}\wt S_{i,i-p},}
\eq{\bld D=\textrm{diag}\bfi{\wt S_{1+p,1},\wt S_{2+p,2},\dots,\wt S_{pN}}.}
Thus, the determinant of the matrix (\ref{eq:Jacobi}) is the following product:

\eq{b \det\bld V \det\bld D\prod_{j=1}^{p-1}\brc{\det\bld A(j)}^2}
and the Jacobian matrix has the maximum rank.

\end{proof}

Now we are to prove the main statement of this subsection.

\begin{prop}
If $N$ and $p$ are relatively prime, then the systems under consideration are completely integrable in the Liouville sense.
\end{prop}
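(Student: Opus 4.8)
The plan is to establish integrability on the generic symplectic leaf of the solvable Poisson algebra (\ref{eq:lpstbrackets}) carried by the variables $\mathcal S_2$, and then to recover the remaining variables $\mathcal S_1$ from the Casimir relations. Recall the general principle: a Hamiltonian system on a Poisson manifold of dimension $n$ whose Poisson tensor has generic rank $R$ (hence corank $n-R$) is integrable in the Liouville sense once one exhibits $n-\tfrac{R}{2}=\tfrac12\brc{n+(n-R)}$ functionally independent functions in involution; equivalently, $n-R$ of them may be taken to be Casimirs, leaving $\tfrac{R}{2}$ genuine integrals that are independent and in involution on the leaf. So the whole argument reduces to a careful bookkeeping of (i) involutivity, (ii) independence, and (iii) an exact count, the last being the substantive step.

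For involutivity I would invoke Proposition \ref{prop:involution}: all coefficients $H_{kj}$ of the expansion (\ref{eq:prop1}) Poisson commute, being generated from the $r$-matrix identity (\ref{eq:prbrackets}); the second order Casimirs $\wt S_{i,i+p}\wt S_{i+p,i}$ commute with everything by definition. Hence the family
\[
\mathcal F=\bfi{H_{kj}}\cup\bfi{\wt S_{i,i+p}\wt S_{i+p,i},\;1\leqslant i\leqslant N}\cup\bfi{H_{Np}}
\]
is in involution. Functional independence of the distinguished subfamily is exactly the content of Proposition \ref{prop:independence}, obtained from the block–triangular Jacobian (\ref{eq:Jacobi}); this is the step where coprimality of $N$ and $p$ first enters, ensuring that the leading monomials in (\ref{eq:casterms}) are distinct (since $\wt S_{i\mp(Nj)\bmod p,\,i}$ range over distinct off-diagonals) so that each diagonal block, and hence the Jacobian, has full rank.

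The heart of the proof is the numerology, and this is where I expect the real obstacle to lie. First I would pin down the rank: the $(2p+1)N$-dimensional space $\mathcal S_2$ carries a Poisson tensor of rank $R\geqslant 2p(N-1)$ by (\ref{eq:symprank}), so its corank satisfies corank $=(2p+1)N-R\leqslant N+2p$; on the other hand the explicit list $\sum_i\wt S_{ii}$, the $2(p-1)$ functions $H_{\cl{jN/p},\pm j}$ with $1\leqslant j\leqslant p-1$ (Proposition \ref{prop:casimirs}), the $N$ quadratic Casimirs of (\ref{eq:psimplecas}), and $H_{Np}$ provides $N+2p$ independent Casimirs, forcing corank $\geqslant N+2p$. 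Hence corank $=N+2p$ and $R=2p(N-1)$ exactly. It then remains to count the genuine integrals of $\mathcal F$: for each $j$ with $0<|j|<p$ the admissible range is $\cl{|j|N/p}\leqslant k\leqslant N$, and because $\gcd(N,p)=1$ one has $jN/p\notin\mathbb Z$, so the classical identity $\sum_{j=1}^{p-1}\flr{jN/p}=\tfrac12(N-1)(p-1)$ applies. Summing and subtracting the $N+2p$ Casimirs already accounted for, the number of independent non-Casimir integrals comes out to be precisely $p(N-1)=\tfrac{R}{2}$, matching the Liouville bound exactly. Coprimality is indispensable here, both for the floor-sum identity and for the sharpness of the rank estimate.

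Finally I would close by returning to the full phase space: the variables $\mathcal S_1$ absent from the Lax matrix satisfy, on the symplectic submanifold where the quadratic Casimirs $\wt S_{i,i+p}\wt S_{i+p,i}$ are nonzero, the relations $\wt S_{i+kp,i}=\mathrm{const}\prod_{j=1}^{k}\wt S_{i+jp,i+(j-1)p}$, so they are determined by $\mathcal S_2$ and their simple dynamics is fixed once the reduced system is solved. Thus a complete set of $\tfrac{R}{2}=p(N-1)$ independent commuting integrals on the generic leaf, together with the $N+2p$ Casimirs, establishes Liouville integrability of the limiting systems whenever $N$ and $p$ are coprime.
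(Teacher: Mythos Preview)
Your proposal is correct and follows essentially the same route as the paper's proof: reduce to the subalgebra $\mathcal S_2$, pin down the rank of the Poisson tensor as exactly $2p(N-1)$ by combining the lower bound (\ref{eq:symprank}) with the exhibition of $N+2p$ independent Casimirs, then invoke Propositions \ref{prop:involution} and \ref{prop:independence} for involutivity and independence. The only notable difference is that you make the Hamiltonian count explicit via the identity $\sum_{j=1}^{p-1}\flr{jN/p}=\tfrac12(N-1)(p-1)$, whereas the paper simply points to the pyramid (\ref{eq:pyramid}) and asserts the total $p(N-1)$; your version is a welcome clarification of that step.
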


\begin{proof}

Poisson algebra of the systems can be separated into two disjoint subalgebras. As it was stated earlier in Subsection \ref{sec:lma} the elements of the first subalgebra are the functions of the elements of the second one on the generic symplectic submanifold. The submanifold corresponding to the elements of the second subalgebra has the dimension \eqs{\brc{2p+1}N-1}.
There is the following condition
(see %Appendix 
\ref{app:dsl}) for the dimension $R$ of the symplectic leaf of this submanifold:

\eq{R\geqslant 2p(N-1).}

Propositions \ref{prop:casimirs} and \ref{prop:independence} give us \eqs{2p+N-1} independent Casimir functions and, consequently, we have the equality

\eq{R=2p(N-1).}

According to the Liouville theorem about integrable systems it is necessary to have \eqs{R/2} functionally independent Hamiltonians in involution for the complete integrability. From Propositions \ref{prop:involution}, \ref{prop:independence}, and (\ref{eq:pyramid}) we have \eqs{p(N-1)=R/2} independent Hamiltonians in involution.

\end{proof}

\section{Conclusion}

We have proposed a procedure giving a limit relation between the elliptic \eqs{SL(N,\mathbb C)} top and the Toda chains. This procedure is similar to the Inozemtsev limit and is a combination of the shift of the spectral parameter, the scalings of coordinates and the trigonometric limit.

Also, in Subsection \ref{sec:somesytem} we have shown, that the generalization (\ref{eq:perscaling}) of the above procedure provides a new class of integrable systems in the case when $N$ and $p>1$ are relatively prime. The open problem is to understand whether the limiting systems are integrable in general case when $p<N/2$. Some statements, such as Propositions \ref{prop:trform}, \ref{prop:involution}, \ref{prop:casimirs}, are still valid in general case, but the whole set of independent Casimir functions and Hamiltonians is not clear.

\section*{Acknowledgements}

We would like to thank A.~M.~Levin and A.~V.~Zotov for many fruitful discussions and A.~Smirnov for useful remarks. We are especially grateful to M.~A.~Olshanetsky for initiating the work and fruitful discussions. The work of both authors was supported by grants RFBR 09-02-00-393, RFBR Consortium E.I.N.S.T.E.IN 09-01-92437-CE and by Federal Agency for Science and Innovations of Russian Federation under contract 14.740.11.0347.

\appendix

\section{Sin-Algebra}
\label{app:veylbasis}

We will use the following notation to simplify formulae:

\eq{\dd\brc{n}=\left\{\matr{ll}{
1,&n\equiv0\bmod N,\\
0,&n\not\equiv0\bmod N,\\
}\right.}

\eq{\bld\rme(z)=\exp\brc{2\pi\rmi z}.}

Elements \eqs{T_{mn}} of the Sin-Algebra basis in \eqs{\mathfrak{sl}\brc{N,\mathbb C}} can be written in the form

\eq{\brc{T_{mn}}_{ij}=\bld\rme\brc{\dfrac{mn}{2N}}\bld\rme\brc{\dfrac{im}N}\dd\brc{j-i-n},
\qq
m\neq0\quad\textrm{or}\quad n\neq0,\qq m,n\in\{0\dots N-1\}.}
For $m,n\in\mathbb Z,$\quad $\brc{m\not\equiv0\bmod N}\;\textrm{or}\;\brc{n\not\equiv0\bmod N},$ the quasi-periodic condition can be introduced

\eq{T_{mn}=\bld\rme\brc{\dfrac{mn-(m\bmod N)(n\bmod N)}{2N}}T_{m\bmod N,n\bmod N},}

\eq{s_{mn}=\bld\rme\brc{\dfrac{(m\bmod N)(n\bmod N)-mn}{2N}}s_{m\bmod N,n\bmod N},}
where \eqs{\bld\rme\brc{\brc{mn-(m\bmod N)(n\bmod N)}/\brc{2N}}=\pm1.}

The commutator relations in this basis are

\eqlb{eq:weilcomm}{\bsq{T_{mn},T_{kl}}=2\rmi\sin\bsq{\dfrac\pi N\brc{kn-ml}}T_{m+k,n+l}.}

One can establish the following relations between the coordinates in the standard \eqs{\bfi{S_{ij}}} and Sin-Algebra \eqs{\bfi{s_{mn}}} bases

\eqlb{eq:weiltr}{S_{ij}=\sum_{m,n}s_{mn}\brc{T_{mn}}_{ij},\qq s_{mn}=\dfrac1N\sum_{i,j}S_{ij}\brc{T_{-m,n}}_{ij}.}

\section{Degenerations of elliptic functions}
\label{app:ellipticfunctions}

Definitions and properties of elliptic functions are borrowed mainly from \cite{Veyl1} and \cite{Mumford}. The main object is the theta function with characteristics defined via

\eq{\theta\bsq{\at a b}\brc{z,\tau}=\sum_{j\in \mathbb Z}q^{\frac12 (j+a)^2}\bld\rme\brc{(j+a)(z+b)},}
where \eqs{q=\bld\rme\brc{\tau}\equiv\exp\brc{2\pi\rmi\tau}}.

We will also need the Eisenstein functions

%\eq{\varepsilon_k(z)=\lim_{M\rightarrow+\infty}\sum_{n=-M}^M(z+n)^{-k},}

\eqlb{eq:eisdef}{\varepsilon_k(z)=\lim_{M\rightarrow+\infty}\sum_{n=-M}^M(z+n)^{-k},
\qq
E_k(z)=\lim_{M\rightarrow+\infty}\sum_{n=-M}^M\varepsilon_k(z+n\tau).}

To determine the limits of Lax matrices we will use the series expansions of the following functions
\eqlb{eq:thetadef}{\vartheta(z)=\theta\bsq{\at{1/2}{1/2}}\brc{z,\tau}=\sum_{j\in\mathbb Z}q^{\frac12\brc{j+\frac12}^2}\bld\rme\brc{\brc{j+\dfrac12}\brc{z+\dfrac12}},}

\eqlb{eq:phidef}{\phi(u,z)=\dfrac{\vartheta(u+z)\vartheta'(0)}{\vartheta(u)\vartheta(z)},}

\eqlb{eq:laxcoeff}{\ds\varphi\bsq{\at mn}(z)=\bld\rme\brc{-\dfrac{nz}N}\phi(-\dfrac{m+n\tau}N,z),\qq
\ds f\bsq{\at mn}(z)=\bld\rme\brc{-\dfrac{nz}N}\partial_u\phi(u,z)|_{u-\frac{m+n\tau}N.}}

The functions satisfy the following well-known identities:

\eq{\phi(u,z)\phi(-u,z)=E_2(z)-E_2(u),}

\eqlb{eq:pdphi}{\partial_u\phi(u,z)=\phi(u,z)(E_1(u+z)-E_1(u)),}
parity

\eq{E_k(-z)=(-1)^k E_k(z),}
\eq{\vartheta(-z)=-\vartheta(z),}
\eq{\phi(u,z)=\phi(z,u)=-\phi(-u,-z),}
and quasi-periodicity

\eqlb{eq:quasiperiodicity}{\matr{ll}{
E_1(z+1)=E_1(z),&E_1(z+\tau)=E_1(z)-2\pi \rmi,\\
\\
E_2(z+1)=E_2(z),&E_2(z+\tau)=E_2(z),\\
\\
\vartheta(z+1)=-\vartheta(z),&\vartheta(z+\tau)=-q^{-\frac12}\bld\rme(-z)\vartheta(z),\\
\\
\phi(u+1,z)=\phi(u,z),&\phi(u+\tau,z)=\bld\rme(-z)\phi(u,z).}}

We will examine degenerations of elliptic functions (\ref{eq:laxcoeff}) in the following limit:

\eq{z=\wt z+\dfrac\tau2,\qq Im(\tau)\rightarrow+\infty.}

Using definition (\ref{eq:phidef}) one can reduce the expansion of \eqs{\varphi\bsq{\at mn}(z)} to the expansion of theta functions. Considering the main non-vanishing terms we have

\eq{\fl\vartheta\brc{-\dfrac mN-\dfrac nN\tau}=\left\{\matr{ll}{
2q^{\frac18}\sin\brc{\pi\dfrac mN}+\bld{\mathrm o}\brc{q^{\frac18}},&n=0,\\
\\
iq^{\frac18-\frac n{2N}}\bld\rme\brc{-\dfrac m{2N}}+\bld{\mathrm o}\brc{q^{\frac18-\frac n{2N}}},&0<n<N,\\
}\right.}

\eq{\fl\vartheta\brc{\wt z+\dfrac\tau2-\dfrac mN-\dfrac nN\tau}=
\left\{\matr{ll}{
-\rmi q^{\frac n{2N}-\frac18}\bld\rme\brc{\dfrac12\brc{\dfrac mN-\wt z}}+\bld{\mathrm o}\brc{q^{\frac n{2N}-\frac18}},&0\leqslant n<\dfrac N2,\\
\\
-2q^{\frac18}\sin\brc{\pi\brc{\wt z-\dfrac mN}}+\bld{\mathrm o}\brc{q^{\frac18}},&n=\dfrac N2,\\
\\
\rmi q^{\frac38-\frac n{2N}}\bld\rme\brc{\dfrac12\brc{\wt z-\dfrac mN}}+\bld{\mathrm o}\brc{q^{\frac38-\frac n{2N}}},&\dfrac N2<n<N,\\
}\right.}
wich gives

\eqlb{eq:phidevel}{\fl\phi\brs{-\dfrac{m+n\tau}N;\wt z+\dfrac\tau2}=
\left\{\matr{ll}{
-\pi\bld\rme\brc{\dfrac m{2N}}\sin^{-1}\brc{\pi\dfrac mN}+\bld{\mathrm o}\brc{1},&n=0,\\
\\
2\pi\rmi q^{\frac nN}\bld\rme\brc{\dfrac mN}+\bld{\mathrm o}\brc{q^{\frac nN}},&0<n<\dfrac N2,\\
\\
4\pi q^{\frac12}\sin\brc{\pi\brc{\wt z-\dfrac mN}}\bld\rme\brc{\dfrac m{2N}+\dfrac12\wt z}+\bld{\mathrm o}\brc{q^{\frac12}},&n=\dfrac N2,\\
\\
-2\pi\rmi q^{\frac12}\bld\rme\brc{\wt z}+\bld{\mathrm o}\brc{q^{\frac12}},&\dfrac N2<n<N,\\
}\right.}

\eqlb{eq:varphiexp}{\fl
\varphi\bsq{\at mn}\brc{\wt z+\dfrac\tau2}=
\left\{\matr{ll}{
-\pi\bld\rme\brc{\dfrac m{2N}}\sin^{-1}\brc{\pi\dfrac mN}+\bld{\mathrm o}\brc{1},&n=0,\\
\\
2\pi\rmi q^{\frac n{2N}}\bld\rme\brc{\dfrac mN-\dfrac{n\wt z}N}+\bld{\mathrm o}\brc{q^{\frac n{2N}}},&0<n<\dfrac N2,\\
\\
4\pi q^{\frac14}\bld\rme\brc{\dfrac m{2N}}\sin\brc{\pi\brc{\wt z-\dfrac mN}}+\bld{\mathrm o}\brc{q^{\frac14}},&n=\dfrac N2,\\
\\
-2\pi\rmi q^{\frac{N-n}{2N}}\bld\rme\brc{\dfrac{N-n}N\wt z}+\bld{\mathrm o}\brc{q^{\frac{N-n}{2N}}},&\dfrac N2<n<N.\\
}\right.}

To evaluate the limit of \eqs{f\bsq{\at m n}} we expand \eqs{E_1\brc{\wt x-\sigma\tau}} as a series in $q$. From the definition (\ref{eq:eisdef}) we get

\eqa{&& E_1\brc{\wt x-\sigma\tau}=\lim_{M\rightarrow+\infty}\sum_{n=-M}^{M}\varepsilon_1\brc{\wt x+\brc{n-\sigma}\tau}=
\varepsilon_1\brc{\wt x-\sigma\tau}+
\cr
&&+\lim_{M\rightarrow+\infty}\sum_{n=1}^{M}
\brc{\varepsilon_1\brc{\wt x+\brc{n-\sigma}\tau}+\varepsilon_1\brc{\wt x-\brc{n+\sigma}\tau}}.}

Using the explicit formula for \eqs{\varepsilon_1(x)} from \cite{Veyl1}

\eq{\fl\varepsilon_1(x)=\pi\cot\brc{\pi x}
=\pi\rmi\dfrac{\bld\rme(x)+1}{\bld\rme(x)-1}=\pi\rmi\left\{\matr{rl}{
-1-2\bld\rme(x)+\bld{\mathrm o}\brc{\bld\rme(x)},&Im(x)\rightarrow +\infty,\\
1+2\bld\rme(x)+\bld{\mathrm o}\brc{\bld\rme(x)},&Im(x)\rightarrow +\infty,\\
}\right.}
one can write down the leading term of \eqs{E_1(\wt x -\sigma\tau)} as follows:

\eq{\fl E_1(\wt x-\sigma\tau)=\left\{\matr{ll}{
\pi\cot(\pi\wt x)+\bld{\mathrm o}\brc{1},&
\sigma=0,\\
\pi\rmi+2\pi\rmi q^\sigma\bld\rme(-\wt x)+\bld{\mathrm o}\brc{q^\sigma},&
0<\sigma<\dfrac12,\\
\pi\rmi+2\pi\rmi q^{\frac12}\brc{\bld\rme(-\wt x)-\bld\rme(\wt x)}+\bld{\mathrm o}\brc{q^{\frac12}},&
\sigma=\dfrac12,\\
\pi\rmi-2\pi\rmi q^{1-\sigma}\bld\rme\brc{\wt x}+\bld{\mathrm o}\brc{q^{1-\sigma}},&
\dfrac12<\sigma<1,
}\right.}
and using (\ref{eq:quasiperiodicity}) generalize it to \eqs{\sigma\in\mathbb R}:

\eq{\fl E_1(\wt x-\sigma\tau)=\left\{\matr{ll}{
2\pi\rmi\flr{\sigma}+\pi\cot(\pi\wt x)+\bld{\mathrm o}\brc{1},&
\{\sigma\}=0,\\
2\pi\rmi\flr{\sigma}+\pi\rmi+2\pi\rmi q^{\{\sigma\}}\bld\rme(-\wt x)+\bld{\mathrm o}\brc{q^{\{\sigma\}}},&
0<\{\sigma\}<\dfrac12,\\
2\pi\rmi\flr{\sigma}+\pi\rmi+2\pi\rmi q^{\frac12}\brc{\bld\rme(-\wt x)-\bld\rme(\wt x)}+\bld{\mathrm o}\brc{q^{\frac12}},&
\{\sigma\}=\dfrac12,\\
2\pi\rmi\flr{\sigma}+\pi\rmi-2\pi\rmi q^{1-\{\sigma\}}\bld\rme\brc{\wt x}+\bld{\mathrm o}\brc{q^{1-\{\sigma\}}},&
\dfrac12<\{\sigma\}<1,
}\right.}
where \eqs{\bfi{\sigma}} is the fractional part of $\sigma$.

To expand \eqs{\partial_u\phi(u,z)|_{u=\wt u-\sigma\tau}} in the limit \eqs{Im(\tau)\rightarrow+\infty} it is convenient to use formula (\ref{eq:pdphi}). Assuming \eqs{z=\wt z+\tau/2} we have to consider the following cases depending on the value of $\sigma$:
\begin{enumerate}
\item{\eqs{\sigma=0}}
    \eq{
    \phi\brs{\wt u,\wt z+\dfrac\tau2}=\dfrac{\pi\bld
    e\brs{-\dfrac12\wt u}}{\sin\pi\wt u}+\bld{\mathrm o}\brc{1},
    }
    \eq{E_1\brs{\wt u}=\pi\cot\pi\wt u+\bld{\mathrm o}\brc{1},
    }
    \eq{E_1\brs{\wt u+\wt z+\dfrac\tau2}=-\pi\rmi+\bld{\mathrm o}\brc{1},}
    \eq{\Downarrow}
    \eq{\partial_u\phi(u,z)|_{u=\wt u-\sigma\tau}=-\pi^2\sin^{-2}\pi\wt u+\bld{\mathrm o}\brc{1}.}
\item{\eqs{0<\sigma<\frac12}}
    \eq{
    \phi\brs{\wt u-\sigma\tau,\wt z+\dfrac\tau2}=2\pi\rmi q\bld\rme(-\wt u)+\bld{\mathrm o}\brc{q},
    }
    \eq{E_1(\wt u-\sigma\tau)=\pi\rmi+2\pi\rmi q\bld\rme(-\wt u)+\bld{\mathrm o}\brc{1},
    }
    \eq{E_1\brs{\wt u+\wt z+\brs{\dfrac12-\sigma}\tau}=
    -\pi\rmi-2\pi\rmi q^{\frac12-\sigma}\bld\rme\brs{\wt u+\wt z}+\bld{\mathrm o}\brc{q^{\frac12-\sigma}},}
    \eq{\Downarrow}
    \eq{\partial_u\phi(u,z)|_{u=\wt u-\sigma\tau}=4\pi^2q\bld\rme\brs{-\wt u}+\bld{\mathrm o}\brc{q}.}
\item{\eqs{\sigma=\frac12}}
    \eq{
    \phi\brs{u,z}=4\pi q^{\frac12}\sin\brs{\pi\brs{\wt u+\wt z}}\bld\rme\brs{-\dfrac12\wt u+\dfrac12\wt z}+\bld{\mathrm o}\brc{q^{\frac12}},
    }
    \eq{E_1\brs{\wt u+\wt z}=\pi\cot\brs{\pi\brs{\wt u+\wt z}}+\bld{\mathrm o}\brc{1},
    }
    \eq{E_1\brs{\wt u-\dfrac12\tau}=\pi\rmi+\bld{\mathrm o}\brc{1},}
    \eq{\Downarrow}
    \eq{\partial_u\phi(u,z)|_{u=\wt u-\sigma\tau}=4\pi^2q^{\frac12}\bld\rme\brc{-\wt u}+\bld{\mathrm o}\brc{q^{\frac12}}.}
\item{\eqs{\frac12<\sigma<1}}
    \eq{
    \phi\brs{\wt u-\sigma\tau,\wt z+\dfrac\tau2}=
    -2\pi\rmi q^{\frac12}\bld\rme\brs{\wt z}+\bld{\mathrm o}\brc{q^{\frac12}},
    }
    \eq{E_1\brs{\wt u-\sigma\tau}=\pi\rmi-2\pi\rmi q^{1-\sigma}\bld\rme\brs{\wt u},
    }
    \eq{E_1(\wt u+\wt z+\brs{\dfrac12-\sigma}\tau)=\pi\rmi+2\pi\rmi q^{\sigma-\frac12}\bld\rme\brs{-\wt u-\wt z}+\bld{\mathrm o}\brc{q^{\sigma-\frac12}},}
    \eq{\Downarrow}
    \eqa{&&\fl\partial_u\phi(u,z)|_{u=\wt u-\sigma\tau}=4\pi^2 q^{\frac12}\bsq{\bld\rme\brs{-\wt u-\wt z+\brs{\sigma-\dfrac12}}-\bld\rme\brs{\wt u+\brs{1-\sigma}\tau}}\bld\rme\brs{\wt z}=\cr
    &&=\left\{\matr{ll}{
    4\pi^2 q\bld\rme\brc{-\wt u}+\bld{\mathrm o}\brc{q},&\dfrac12<\sigma<\dfrac34,\\
    4\pi^2 q^{\frac34}\brc{\bld\rme\brc{-\wt u}-\bld\rme\brc{\wt u+\wt z}},&\sigma=\dfrac34,\\
    -4\pi^2q^{\frac32-\sigma}\bld\rme\brc{\wt u+\wt z},&\dfrac34<\sigma<1.\\
    }\right.}
\end{enumerate}

Summarizing all the special cases we obtain

\eq{\fl\partial_u\phi(u,z)|_{u=\wt u-\sigma\tau}=\left\{\matr{ll}{
-\pi^2\sin^{-2}\pi\wt u+\bld{\mathrm o}\brc{1},&\sigma=0,\\
4\pi^2q\bld\rme\brc{-\wt u}+\bld{\mathrm o}\brc{q},&0<\sigma<\dfrac34,\\
4\pi^2q^{\frac34}\bsq{\bld\rme\brs{-\wt u}-\bld\rme\brs{\wt u+\wt z}}+\bld{\mathrm o}\brc{q^{\frac34}},&\sigma=\dfrac34,\\
-4\pi^2q^{\frac32-\sigma}\bld\rme\brs{\wt u+\wt z}+\bld{\mathrm o}\brc{q^{\frac32-\sigma}},&\dfrac34<\sigma<1.
}\right.}

Finally, from (\ref{eq:laxcoeff}) we get

\eqlb{eq:fexp}{\fl f\bsq{\at mn}\brs{\wt z+\dfrac\tau2}
=\left\{\matr{ll}{
-\pi^2\sin^{-2}\brc{\pi\dfrac mN}+\bld{\mathrm o}(1),&n=0,\\
\\
4\pi^2\bld\rme\brc{\dfrac mN}\bld\rme\brc{-\dfrac{n\wt z}N}q^{\frac n{2N}}+\bld{\mathrm o}\brc{q^{\frac n{2N}}},&0<n<\dfrac{3N}4,\\
\\
4\pi^2\bsq{\bld\rme\brc{\dfrac mN}-\bld\rme\brc{-\dfrac nN+\wt z}}\bld\rme\brc{-\dfrac34\wt z}q^{\frac38}+\bld{\mathrm o}\brc{q^{\frac38}},&n=\dfrac{3N}4,\\
\\
-4\pi^2\bld\rme\brc{-\dfrac mN+\wt z}\bld\rme\brc{-\dfrac{n\wt z}N}q^{\frac32\brc{1-\frac nN}}+\bld{\mathrm o}\brc{q^{\frac32\brc{1-\frac nN}}},&\dfrac{3N}4<n<N.\\
}\right.}

\section{Scaling inequality}
\label{app:inequality}

We are going to prove the inequality
\eqs{\alpha_i+\alpha_j-\alpha_{i+j}\geqslant0,\quad\forall i,j\in\mathbb Z},
for the function

\eq{\alpha_i=\left\{\matr{ll}{
\dfrac k{2N},&0\leqslant k<p\leqslant\dfrac N2,\\
\\
\dfrac p{2N},&p\leqslant k\leqslant N-p,\\
\\
\dfrac12-\dfrac k{2N},&N-p<k<N,\\
}\right.}
where \eqs{k\equiv i\bmod N}.

Taking into account that \eqs{\alpha_{i+N}=\alpha_i}, it is enough to consider the case \eqs{i,j\in\bsq{-N/2;N/2}}, where we have

\eq{\alpha_i=\left\{\matr{ll}{
\dfrac{\abs i}{2N},&\abs i\leqslant p,\\
\\
\dfrac p{2N},&p\leqslant\abs i\leqslant\dfrac N2.\\
}\right.}

It is easy to check that there are the following two cases:
\begin{enumerate}
\item{If \eqs{\brc{\abs i> p}\vee\brc{\abs j> p},} then}

\eq{\alpha_{i+j}\leqslant\dfrac p{2N}\leqslant\alpha_i+\alpha_j.}

\item{If \eqs{\brc{\abs i\leqslant p}\wedge\brc{\abs j\leqslant p},} then}

\eq{\alpha_{i+j}\leqslant\dfrac{\abs{i+j}}{2N}\leqslant\dfrac{\abs i}{2N}+\dfrac{\abs j}{2N}=\alpha_i+\alpha_j.}

\end{enumerate}

\section{Dimension of the symplectic leaf}
\label{app:dsl}

The phase space of the system from Subsection \ref{sec:somesytem} is equipped with the following Poisson structure:

\eqa{
&&\{\wt S_{ii},\wt S_{jk}\}=N(\wt S_{ji}\delta_{ik}-\wt S_{ik}\delta_{ij}),
\cr\cr
&&\fl\{\wt S_{ij},\wt S_{kl}\}=N(\wt S_{kj}\delta_{il}-\wt S_{il}\delta_{kj}),
\cr
&&(0<(j-i)\bmod N\leqslant p)\wedge(0<(l-k)\bmod N\leqslant p)
\wedge(0<(j+l-i-k)\bmod N\leqslant p),
\cr\cr
&&\{\wt S_{ij},\wt S_{kl}\}=N(\wt S_{kj}\delta_{il}-\wt S_{il}\delta_{kj}),
\cr
&&(N-p\leqslant(j-i)\bmod N< N)\wedge(N-p\leqslant(l-k)\bmod N<N)\wedge
\cr
&&\wedge(N-p\leqslant(j+l-i-k)\bmod N<N).
}

We are interested in the specific subalgebra
\eq{\fl\mathcal S_2=\bfi{\wt S_{ij},\;(0\leqslant\brc{j-i}\bmod N\leqslant p)\;\mathrm{or}\;(N-p\leqslant \brc{j-i}\bmod N<N)}}
and the restriction of the Poisson tensor \eqs{\pi^{(ij)(kl)}\brs{\mathcal S_2}} on it

\eq{\bfi{F\brs{\mathcal S_2},G\brs{\mathcal S_2}}=\pi^{(ij)(kl)}\brs{\mathcal S_2}\;\partial_{(ij)}F\;\partial_{(kl)}G,\qq \partial_{(ij)}=\pd{}{\wt S_{ij}}.}

The dimension of the symplectic leaf of the Poisson submanifold \eqs{\mathcal S_2} is the rank $R$ of \eqs{\pi^{(ij)(kl)}\brs{\mathcal S_2}}. Our aim is to minorize the rank $R$. Note that \eqs{\pi^{(ij)(kl)}\brs{\mathcal S_2}} can be represented as a square \eqs{(2p+1)N\times(2p+1)N}-matrix. To write down this matrix in a block triangular form we use the following ordering:

\eq{\bld Y=\bfi{\bld Y(k),\; k=\pm p,\dots,\pm1,0},
\qq \bld Y(k)=\bfi{\wt S_{i,i+k},\; i=1,\dots,N}.}

Thus, we have

\arraycolsep=0pt
\eqlb{eq:pitens}{\fl\mfs{\scriptsize}{
\bordermatrix{
&\bm{Y}(p)\quad\bm{Y}(-p)&\bm{Y}(p-1)\;\bm{Y}(1-p)&\dots&\bm{Y}(1)\;\bm{Y}(-1)&\bm{Y}(0)\cr
\cr
\matr{l}{\bm{Y}(p)\\ \bm{Y}(-p)}
&\bm{0}&\bm{0}&\bm{0}&\bm{0}&\pmat{c}{\bm{P_+}(0)\\ \bm{P_-}(0)}\cr
\cr
\matr{l}{\bm{Y}(p-1)\\ \bm{Y}(1-p)}&\bm{0}&\bm{0}&\bm{0}&\pmat{cc}{\bm{P_+}(1)&\bm{0}\\ \bm{0}&\bm{P_-}(1)}&\vdots\cr
\cr
\matr{l}{\vdots}&\bm{0}&\bm{0}&\adots&\adots&\vdots\cr
\cr
\matr{l}{\bm{Y}(1)\\ \bm{Y}(-1)}&\bm{0}&\pmat{cc}{\bm{P_+}(p-1)&\bm{0}\\ \bm{0}&\bm{P_-}(p-1)}&\adots&\adots&\vdots\cr
\cr
\matr{l}{\bm{Y}(0)}&\pmat{cc}{\bm{P_+}(p)\;&\bm{P_-}(p)}&\dots&\dots&\dots&\bm{0}
}},}
where

\eqa{\fl(P_+(k))_{ij}=\bfi{\wt S_{i,i+p-k},\wt S_{j,j+k}}
=N\brc{\wt S_{j,j+p}\delta_{i,j+k}-\wt S_{i,i+p}\delta_{j,i+p-k}},\qq
%\cr
k\in\bfi{0,\dots,p},
\cr
\fl(P_-(k))_{ij}=\bfi{\wt S_{i,i+k-p},\wt S_{j,j-k}}
=N\brc{\wt S_{j,j-p}\delta_{i,j-k}-\wt S_{i,i-p}\delta_{j,i+k-p}},\qq
%\cr
k\in\bfi{0,\dots,p}.}

It can be easily seen that the rank of each matrix \eqs{\bld P_{\pm}(k)} is equal to \eqs{N-1}. Due to the block triangular form of the matrix (\ref{eq:pitens}) we obtain the required condition
\eqlb{eq:symprank}{R\geqslant 2p(N-1).}

\section{Addition to Proposition \ref{prop:independence}, \eqs{\det\bld A(j)}}
\label{app:det}

Here we are to consider square matrices \eqs{\bld A(j)} with the following structure: every element \eqs{A_{mn}(j)} of \eqs{\bld A(j)} is a complete homogeneous symmetric polynomial (see \cite{Macdonald}) \eqs{h_{m-1}(n,j)} in variables 
\eqs{\bfi{l_{1+(n-1)p+rp},\;0\leqslant r\leqslant k(j)-1}}, i.e.,

\eqlb{eq:expel}{A_{mn}(j)=h_{m-1}(n,j),\qq
\sum_{m=0}^{+\infty}h_m(n,j)t^m=\prod_{r=0}^{k(j)-1}\brc{1-t\, l_{1+(n-1)p+rp}}^{-1},}
where we use the notation \eqs{k(j)=\cl{jN/p}} from Proposition \ref{prop:independence}.

The determinants of the matrices under consideration are homogeneous polynomials of order \newline \eqs{\brc{k(j)(k(j)-1)}/2}. In order to compute these determinants we will find the appropriate number of roots and calculate their values at one particular point.

As one can see from (\ref{eq:expel}) columns in matrix \eqs{\bld A(j)} depend on the following set of variables:

\eqlb{eq:matvarset}{
\arraycolsep=0.5em
\fl\matr{cc}{
\textrm{Column Number}&\textrm{Set of Variables}\\
\matr{c}{1\\2\\\vdots\\k(j)\\}&
\matr{ccccccc}{
l_1&l_{1+p}&\dots&l_{1+(k(j)-1)p}&\\
&l_{1+p}&\dots&l_{1+(k(j)-1)p}&l_{1+k(j)p}\\
&&\ddots&\dots&\dots&\ddots\\
&&&l_{1+(k(j)-1)p}&l_{1+k(j)p}&\dots&l_{1+2(k(j)-1)p}\\
}\\}}

It follows from (\ref{eq:matvarset}) that two adjacent columns will coincide if we put the first variable, e.g., $l_1$, in one column equal to the last variable, e.g., \eqs{l_{1+k(j)p}}, in the subsequent column. Thus, we have \eqs{\cl{jN/p}-1} roots. It turns out that the result can be generalized to any $c$ adjacent columns, where \eqs{2\leqslant c\leqslant k(j)=\cl{jN/p}}. In other words, if the first variable in the first column is equal to the last variable in the last column, then the determinant is equal to zero. Indeed, without loss of generality we can consider the first $c$ columns. By assumption,

\eq{l_{1+(k(j)+c-2)p}=l_1.}

We will prove that in this case the rows of \eqs{\bld A(j)} are linearly dependent. If we multiply the $m$'th row of matrix  by the elementary symmetric polynomial \eqs{(-1)^{k(j)-m} e_{k(j)-m}(j)} in variables \newline
\eqs{\bfi{l_{1+rp},\;0\leqslant r\leqslant 2k(j)-2,\;r\neq k(j)+c-2}}

\eq{\sum_{m\geqslant 0}(-1)^m e_m(j) t^m=\prod_{r=0,r\neq k(j)+c-2}^{2k(j)-2}\brc{1-t\,l_{1+rp}},}
then after summation we get the zero row due to the fact that the elements of the row are coefficients in front of \eqs{t^{k(j)-1}} in the following polynomial with the maximum degree \eqs{k(j)-2}:

\eqa{
&&\sum_{m\geqslant 0}(-1)^m e_m(j) t^m\sum_{m_1\geqslant0}h_{m_1}(n,j) t^{m_1}=
\cr
&&=\left\{\matr{ll}{
\brc{\prod_{r=0}^{n-2}\brc{1-t\, l_{1+rp}}}\brc{\prod_{r=k(j)+n-1,r\neq k(j)+c-2}^{2k(j)-2}\brc{1-t\, l_{1+rp}}},\qq&n<c,\\
\\
\brc{\prod_{r=1}^{n-2}\brc{1-t\, l_{1+rp}}}\brc{\prod_{r=k(j)+n-1}^{2k(j)-2}\brc{1-t\, l_{1+rp}}},&n\geqslant c.\\
}\right.}

Thus, we have \eqs{\brc{k(j)(k(j)-1)}/2} roots. Now we calculate the determinant at the following point
\eq{l_{i_1}=1,\qq l_{i_2}=0,\qq A_{mn}=\brc{\at{m+n-3}{m-1}},
}
\eq{i_1=\bfi{1+mp,\quad k(j)\leqslant m\leqslant 2\brc{k(j)-1}},
\qq
i_2=\bfi{1+mp,\quad 0\leqslant m\leqslant k(j)}.
}

Subtracting columns according to the well-known formula, i.e.,

\eq{\brc{\at{N+1}{m+1}}=\brc{\at Nm}+\brc{\at N{m+1}}}
\eq{\Downarrow}
\eq{A_{m+1,n}-A_{m+1,n-1}=A_{mn},}
one can obtain a lower unit triangular matrix, so at the point selected above we have

\eq{\det\bld A = 1.}

Finally, we get

\eqal{eq:symdet}{
&&\det\bld A(j)=\prod_{i_1>i_2}\brc{l_{i_1}-l_{i_2}},
\cr\cr
&&i_1=\bfi{1+mp,\quad k(j)\leqslant m\leqslant 2\brc{k(j)-1}},
\qq
i_2=\bfi{1+mp,\quad 0\leqslant m\leqslant k(j)}.}

\bibliographystyle{unsrt}
\bibliography{references}

\end{document}